\title{Interpolation in Valiant's theory}
\institute{LIP\thanks{UMR 5668 ENS Lyon, CNRS, UCBL, INRIA.},
\'Ecole Normale Sup\'erieure de Lyon.\\
{\tt [Pascal.Koiran,Sylvain.Perifel]@ens-lyon.fr}}
\author{Pascal Koiran\and Sylvain Perifel}
\date{\today}
\newcommand\cc{\ensuremath{\mathbb{C}}}
\newcommand\pspace{\ensuremath{\mathsf{PSPACE}}}
\newcommand\vpzero{\ensuremath{\mathsf{VP}^0}}
\newcommand\vp{\ensuremath{\mathsf{VP}}}
\newcommand\vpnb{\ensuremath{\mathsf{VP}_{\mathsf{nb}}}}
\newcommand\vpnbzero{\ensuremath{\mathsf{VP}^0_{\mathsf{nb}}}}
\newcommand\vnp{\ensuremath{\mathsf{VNP}}}
\newcommand\vnpzero{\ensuremath{\mathsf{VNP}^0}}
\newcommand\vnpnb{\ensuremath{\mathsf{VNP}_{\mathsf{nb}}}}
\newcommand\vnpnbzero{\ensuremath{\mathsf{VNP}^0_{\mathsf{nb}}}}
\newcommand\vpip{\ensuremath{\mathsf{V\Pi P}}}
\newcommand\vpipzero{\ensuremath{\mathsf{V\Pi P}^0}}
\newcommand\poly{\ensuremath{\mathsf{poly}}}
\newcommand\nn{\ensuremath{\mathbb{N}}}
\newcommand\fp{\ensuremath{\mathsf{FP}}}
\newcommand\p{\ensuremath{\mathsf{P}}}
\newcommand\pnu{\ensuremath{\mathbb{P}}}
\newcommand\np{\ensuremath{\mathsf{NP}}}
\newcommand\npnu{\ensuremath{\mathbb{NP}}}
\newcommand\per{\ensuremath{\mathrm{PER}}}
\newcommand\sharpp{\ensuremath{\mathsf{\sharp P}}}
\newcommand\pp{\ensuremath{\mathsf{PP}}}
\newcommand\gapp{\ensuremath{\mathsf{GapP}}}
\newcommand\gapppoly{\ensuremath{\mathsf{GapP/poly}}}
\newcommand\chpoly{\ensuremath{\mathsf{CH/poly}}}
\newcommand\ch{\ensuremath{\mathsf{CH}}}
\newcommand\bit{\ensuremath{\mathrm{Bit}}}
\begin{document}

\maketitle

\begin{abstract}
  We investigate the following question: if a polynomial can be evaluated at
  rational points by a polynomial-time boolean algorithm, does it have a
  polynomial-size arithmetic circuit? We argue that this question is certainly
  difficult. Answering it negatively would indeed imply that 
the constant-free versions of the algebraic complexity classes 
 \vp\ and \vnp\ defined by Valiant are different. 
Answering this question positively would imply
  a transfer theorem from 
boolean to algebraic complexity.

  Our proof method relies on Lagrange interpolation and 
on 
recent results connecting the (boolean) counting hierarchy 
to algebraic complexity classes.
As a byproduct we obtain two additional results: 
\begin{itemize}
\item[(i)] The constant-free, degree-unbounded version of Valiant's hypothesis 
$\vp \neq \vnp$ implies the
  degree-bounded version. 
This result was previously known to hold for fields 
of positive characteristic only.
\item[(ii)] If exponential sums of easy to compute
  polynomials can be computed efficiently, then the same is true 
of exponential products. 
We point out an application of this result to the P=NP 
problem in the Blum-Shub-Smale model of computation 
over the field of complex numbers.
\end{itemize}
\end{abstract}

\section{Introduction}

{\bf Motivation --} The starting point of this paper is a question raised 
by Christos Papadimitriou in a personal communication to Erich Kaltofen\footnote{At the Oberwolfach complexity theory workshop where a part of this work was presented in June 2007, several participants told P.K. 
that they had independently thought of the same question.}:
\begin{center}
  \textbf{Question}\hspace{5mm} (*)
\begin{minipage}[c]{.85\textwidth}
  If a multivariate polynomial $P$ can be evaluated by a (boolean)
  polynomial-time algorithm on rational inputs, does that imply that $P$ can
  be computed by a polynomial-size arithmetic circuit?  In such a circuit, the
  only allowed operations are additions, subtractions, and multiplications.
\end{minipage}\medskip
\end{center}

This question can be interpreted in several ways: 
one should at least 
state in which ring the coefficients of $P$ lie, and which constants
can be used by the arithmetic circuits. 
Here we will focus on polynomials with integer coefficients, and most of 
the paper will deal with constant-free circuits. 
In Section~\ref{subsec_constants} we study the case of circuits with rational 
constants computing polynomials with integer coefficients
(as we shall see, this is more natural than it might seem at first sight).

As pointed out by Papadimitriou, Strassen's ``Vermeidung von Divisionen'' (see
for instance~\cite{BCS}, chapter~7) shows that for evaluating a low-degree
polynomial $P$, divisions would not increase exponentially the power of
arithmetic circuits.  It is indeed a natural question whether, more generally,
all boolean operations can be replaced efficiently by additions, subtractions
and multiplications. 
Questions of the same flavour (can ``looking at bits'' help for arithmetic computations?) have been studied before. 
In particular, Kaltofen and Villard have shown that looking at bits 
does help for computing the determinant~\cite{KalVil04}.

{\bf Discussion --} 
It is not clear what the correct answer to question (*) should be.
In this paper we will argue that answering  it either way
seems difficult.  

A natural strategy for obtaining a negative answer to question (*)
 would be to exhibit a family of polynomials that are easy to
evaluate on rational inputs but hard to evaluate by arithmetic circuits.
Unfortunately, there seems to be a lack of candidate polynomials.  Another
difficulty is that a negative answer would imply the separation of the
algebraic complexity classes $\vp^0$ and $\vnp^0$.  This observation is our
main contribution to the study of question (*), and it is established in
Theorem~\ref{cor*}.  The classes $\vp^0$ and $\vnp^0$ are constant-free
versions of the classes $\vp$ (of ``easily computable polynomial families'')
and $\vnp$ (of ``easily definable polynomial families'') 
introduced by Valiant
(precise definitions are given in the next section).  The separation $\vp^0
\neq \vnp^0$ seems very plausible, but it also seems very difficult to
establish.
As explained at the beginning of the introduction, 
we study in Section~\ref{subsec_constants} 
 the case of circuits with rational 
constants computing polynomials with integer coefficients.
Allowing rational constants makes the hypothesis that question (*) 
has a negative answer stronger than in the constant-free case.
Accordingly, we obtain a stronger conclusion: we can now show that
the hypothesis would imply a superpolynomial lower bound 
on the size of arithmetic circuits computing the permanent.

Obtaining a positive answer to question (*) 
also seems difficult since it would imply the following transfer
theorem: $\fp = \sharpp \Rightarrow \vp = \vnp$ (assuming that $\fp =
\sharpp$, the permanent must be in $\fp$; a positive answer to question (*)
would therefore imply that the permanent is in $\vp$, and that $\vp = \vnp$ by
completeness of the permanent).  Unfortunately, in spite of all the work
establishing close connections between the boolean model of computation and
the algebraic models of Valiant and of Blum, Shub and
Smale~\cite{BCSS,burgisser2000,Burg04,FouKoi98,FouKoi00,KoiPe06,KoiPe07b,KoiPe07a}
no such transfer theorem is known.  In fact, we do not know of any hypothesis
from boolean complexity theory that would imply the equality $\vp = \vnp$ (but
transfer theorems in the opposite direction were established
in~\cite{burgisser2000}).

{\bf Summary of results --} 
Most of our results are derived from Theorem~\ref{main_th}, our main theorem:
if the evaluation of 
a family of polynomials $(f_n)$ at integer points is a problem 
that lies in the (non uniform) counting hierarchy,
 the hypothesis $\vp^0 = \vnp^0$ implies that $(f_n)$ can be evaluated by 
polynomial-size arithmetic circuits.
Theorem~\ref{cor*}, which contains our main contribution to the study 
of question~(*), follows immediately since polynomial-time problems 
lie in the counting hierarchy.
The proof of Theorem~\ref{main_th} relies on techniques 
from~\cite{ABKM2005,burgisser2007} and on Lagrange interpolation.
Besides the application to question (*), we derive two additional
results from Theorem~\ref{main_th}:
\begin{itemize}
\item The elements of the complexity classes $\vp$, $\vnp$ and of their
  constant-free versions are families of polynomials of polynomially bounded
  degree.  We show in Theorem~\ref{cor_bnb} that the collapse $\vp^0 =
  \vnp^0$ would imply the same collapse for the unbounded versions of $\vp^0$
  and $\vnp^0$. For fields of positive characteristic, the same result (and
  its converse) was obtained with different techniques by
  Malod~\cite{Malod07,malod2003}.

\item Our third application of Theorem~\ref{main_th} is to the ``$\p = \np$?''
  problem in the Blum-Shub-Smale model of computation over $\cc$
(or more generally, fields of characteristic 0). One natural
  strategy for separating $\p_{\cc}$ from $\np_{\cc}$ would be to exhibit a
  problem $A$ in $\np_{\cc} \setminus \p_{\cc}$.  Drawing on results
  from~\cite{KoiPe06}, we show that this strategy is bound to fail for a
  fairly large class of ``simple'' problems~$A$, unless one can prove that
  $\vp^0 \neq \vnp^0$.  The class of ``simple'' problems that we have in mind
  is $\np_{(\cc,+,-,=)}$.  This is the class of \np\ problems over the set of
  complex numbers endowed with addition, subtraction, and equality tests
  (there is therefore no multiplication in this structure).  It contains many
  natural problems, such as Subset Sum and Twenty Questions~\cite{BCSS,SS95},
  that most likely belong to $\np_{\cc} \setminus \p_{\cc}$.
As an intermediate result, we show in Theorem~\ref{cor_bnb} that 
if exponential sums of easy to compute
  polynomials can be computed efficiently, then the same is true 
of exponential products. 
\end{itemize}




\section{Preliminaries}

\subsection{Valiant's Classes} \label{valiant}

In Valiant's model, one computes families of polynomials. A book-length
treatment of this topic can be found in~\cite{burgisser2000}. We fix a field
$K$ of characteristic zero.

An arithmetic circuit is a circuit whose inputs are indeterminates
$x_1,\dots,x_{u(n)}$ together with arbitrary constants of $K$; there are $+$,
$-$ and $\times$-gates, and we therefore compute multivariate polynomials. The
polynomial computed by an arithmetic circuit is defined in the usual way by
the polynomial computed by its output gate. The size of a circuit is the
number of gates.

Thus a family $(C_n)$ of arithmetic circuits computes a family $(f_n)$ of
polynomials, $f_n\in K[x_1,\dots,x_{u(n)}]$. The class \vpnb\ defined in
\cite{malod2003} is the set of families $(f_n)$ of polynomials computed by a
family $(C_n)$ of polynomial-size arithmetic circuits, i.e., $C_n$ computes
$f_n$ and there exists a polynomial $p(n)$ such that $|C_n|\leq p(n)$ for all
$n$. We will assume without loss of generality that the number $u(n)$ of
variables is bounded by a polynomial function of $n$. The subscript
``$\mathsf{nb}$'' indicates that there is no bound on the degree of the
polynomial, in contrast with the original class \vp\ of Valiant where a
polynomial bound on the degree of the polynomial computed by the circuit is
required. Note that these definitions are nonuniform.

The class \vnp\ is the set of families of polynomials defined by an
exponential sum of \vp\ families. More precisely, $(f_n(\bar x))\in\vnp$ if
there exists $(g_n(\bar x,\bar y))\in\vp$ and a polynomial $p$ such that
$|\bar y| = p(n)$ and $f_n(\bar x) =
\sum_{\bar\epsilon\in\{0,1\}^{p(n)}}g_n(\bar x, \bar\epsilon).$ Similarly, the
class \vpip\ is the set of families of polynomials defined by an exponential
product of \vpnb\ families. More precisely, $(f_n(\bar x))\in\vpip$ if there
exists $(g_n(\bar x,\bar y))\in\vpnb$ and a polynomial $p$ such that $|\bar y|
= p(n)$ and $f_n(\bar x) = \prod_{\bar\epsilon\in\{0,1\}^{p(n)}}g_n(\bar x,
\bar\epsilon).$

We can also define constant-free circuits: the only constant allowed is then 1
(in order to allow the computation of constant polynomials). In this case, we
compute polynomials with integer coefficients. If $f$ is a polynomial with
integer coefficients, we denote by $\tau(f)$ the size of a smallest
constant-free circuit computing $f$. For classes of families of polynomials,
we will use the superscript 0 to indicate the absence of constant: for
instance, we will write \vpnbzero. For bounded-degree classes, we are to be
more careful because we also want to avoid the computation of constants of
exponential bitsize: we first need the following definition.
\begin{definition}
  Let $C$ be an arithmetic circuit. The \emph{formal degree} of a gate of $C$
  is defined by induction:
  \begin{itemize}
  \item the formal degree of an input is 1;
  \item the formal degree of a gate $+$ or $-$ is the maximum of the formal
    degrees of its inputs;
  \item the formal degree of a gate $\times$ is the sum of the formal degrees
    of its inputs.
  \end{itemize}
  Now, the formal degree of a circuit is the formal degree of the output gate.
\end{definition}
We are now able to define constant-free degree-bounded Valiant's classes. 
A family of polynomials $(f_n)$ belongs to \vpzero\ if it is computable
by a family of circuits of size and formal degree bounded 
by a polynomial function of $n$. 
The class \vnpzero\ is then defined accordingly by a
sum of \vpzero\ families,
in the same way as $\vnp$ is defined from $\vp$.

\begin{remark}\label{remark1}
  The hypothesis $\tau(\per_n)=n^{O(1)}$ used in~\cite{burgisser2007} is
  implied by the hypothesis 
$\vnpzero\subset\vpnbzero$ and hence by $\vpzero=\vnpzero$. As
  mentioned in~\cite{burgisser2007}, the converse $\tau(\per_n)=
  n^{O(1)}\Rightarrow\vnpzero=\vpzero$ is not known to hold, because the
  family $(\per_n)$ is not known to be \vnpzero-complete in a constant-free
  context (the proof of completeness of Valiant~\cite{valiant1979} indeed uses
  the constant $1/2$). We will mostly be concerned by the hypothesis
  $\vpzero=\vnpzero$, but we will come to the hypothesis
  $\tau(\per_n)=n^{O(1)}$ in Section~\ref{subsec_constants} when dealing with
  circuits with constants.
\end{remark}

\subsection{Counting Classes}

In this paper we will encounter several counting classes, in particular the
counting hierarchy defined below. Let us first see two classes of functions,
\sharpp\ and \gapp.
\begin{definition}
  \begin{itemize}
  \item The class \sharpp\ is the set of functions
    $f:\{0,1\}^*\rightarrow\{0,1\}^*$ such that there exist a language
    $A\in\p$ and a polynomial $p(n)$
    satisfying $$f(x)=\#\{y\in\{0,1\}^{p(|x|)}:(x,y)\in A\}.$$
  \item A function $f$ is in \gapp\ if it is the difference of two functions
    in \sharpp.
  \end{itemize}
\end{definition}
Returning to classes of languages we 
recall the definition of the counting hierarchy, introduced by
Wagner~\cite{wagner1986}. It contains all the polynomial hierarchy
$\mathsf{PH}$ and is contained in \pspace. It is defined via the ``majority''
operator $\mathbf{C}$ as follows.
\begin{definition}
  \begin{itemize}
  \item If $K$ is a complexity class, the class $\mathbf{C}.K$ is the set of
    languages $A$ such that there exist a language $B\in K$ and a polynomial
    $p(n)$ satisfying
    $$x\in A\iff \#\{y\in\{0,1\}^{p(|x|)}:(x,y)\in B\}\geq 2^{p(|x|)-1}.$$
  \item The $i$-th level $\mathsf{C_iP}$ of the counting hierarchy is defined
    recursively by $\mathsf{C_0P}=\p$ and
    $\mathsf{C_{i+1}P}=\mathbf{C}.\mathsf{C_iP}$. The counting hierarchy \ch\
    is the union of all these levels $\mathsf{C_iP}$.
  \end{itemize}
\end{definition}
Level 1 of \ch, that is, $\mathbf{C}.P$, is also called \pp. Since Valiant's
classes are nonuniform, we will rather work with nonuniform versions of these
boolean classes, as defined now following Karp and Lipton~\cite{KL1982}.
\begin{definition}
  If $K$ is a complexity class, the class $K/\poly$ is the set of languages
  $A$ such that there exist a language $B\in K$, a polynomial $p(n)$ and a
  family of words (called advices) $(a_n)_{n\geq 0}$ satisfying
  \begin{itemize}
  \item for all $n\geq 0$, $|a_n|\leq p(n)$;
  \item for all word $x$, $x\in A\iff (x,a(|x|))\in B$.
  \end{itemize}
  Remark that the advice only depends on the size of $x$: it must therefore be
  the same for all words of same length.
\end{definition}

\subsection{Sequences of Integers}

Our aim now is to introduce a notion of complexity of a sequence of
integers. In order to avoid dealing with the sign of integers separately, we
assume that we can retrieve it from the boolean encoding of the integers. For
example, 
the sign could be given by the first bit of the encoding 
and the absolute value by the remaining bits.

\begin{definition}
  A \emph{sequence of exponential bitsize} is a sequence of integers
  $(a(n,k))$ such that there exists a polynomial $p(n)$ satisfying:
  \begin{enumerate}
  \item $a(n,k)$ is defined for $n,k\in\nn$ and $0\leq k < 2^{p(n)}$;
  \item for all $n>1$, for all $k < 2^{p(n)}$, the bitsize of $a(n,k)$ is
    $\leq 2^{p(n)}$.
  \end{enumerate}
  From $a(n,k)$, the following language is then defined:
  $$\bit(a)=\{(1^n,k,j,b)|\mbox{ the $j$-th bit of $a(n,k)$ is }b\},$$
\end{definition}
The reader should
be aware that the above definition and the next one are not quite the same 
as in~\cite{burgisser2007}: we use a unary encoding for $n$ instead of a binary encoding.

\begin{definition}
  A sequence $a(n,k)$ of exponential bitsize is \emph{definable in \chpoly} if
  the language $\bit(a)$ is in \chpoly.
\end{definition}
\begin{remark}
  We shall also meet sequences with more than two parameters $(n,k)$, for
  example $a(n,\alpha^{(1)},\dots,\alpha^{(n)})$ for some integers
  $\alpha^{(i)}$. In order to see it as a sequence with two parameters,
  $(\alpha^{(1)},\dots,\alpha^{(n)})$ will be considered as the encoding of a
  single integer. The parameter $n$ might also be given as a subscript, as in
  $f_n(k)$, which should better be written $f(n,k)$.
\end{remark}

Let us now propose a similar definition for families of polynomials.
\begin{definition} \label{eval}
  Let $(f_n(x_1,\dots,x_{u(n)}))$ be a family of polynomials with integer
  coefficients. We say that $(f_n)$ can be \emph{evaluated in \chpoly\ at
    integer points} if the following conditions are satisfied:
  \begin{enumerate}
  \item the number $u(n)$ of variables is polynomially bounded;
  \item the degree of $f_n$ as well as the bitsize of the coefficients of
    $f_n$ are bounded by $2^{p(n)}$ for some polynomial $p(n)$;
  \item the language $\{(1^n,i_1,\dots,i_{u(n)},j,b)|\mbox{ the
    }j\mbox{-th bit of }f_n(i_1,\dots,i_{u(n)})\mbox{ is }b\}$ is in \chpoly.
  \end{enumerate}
\end{definition}
\begin{remark} \label{peval}
The same definition can be made for other complexity classes than \chpoly.
For instance, if we replace \chpoly\ by \p\ we obtain the notion of ``polynomial time evaluation at integer points''. This notion will be useful for the study 
of question~(*). 
\end{remark}
The following lemma is obvious from these definitions.
\begin{lemma}\label{lem_poly_int}
  The family $(f_n(x_1,\dots,x_{u(n)}))$ can be evaluated in \chpoly\ at
  integer points if and only if the sequence of integers
  $a(n,i_1,\dots,i_{u(n)})=f_n(i_1,\dots,i_{u(n)})$ is definable in \chpoly.
\end{lemma}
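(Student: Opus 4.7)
The plan is to unfold both definitions and check that they describe the same decision problem up to a polynomial-time re-encoding of the tuple $(i_1,\dots,i_{u(n)})$ as a single integer $k$. Since $u(n)$ is polynomially bounded, any standard pairing function gives a polynomial-time computable and invertible bijection between tuples whose total bitsize is polynomial in $n$ and integers $k<2^{q(n)}$ for some polynomial $q$. Under this identification, the language $\bit(a)$ and the evaluation language of Definition~\ref{eval} reduce to each other in polynomial time, so membership of one in \chpoly\ is equivalent to membership of the other.

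The only verification that is not pure bookkeeping is that the ``exponential bitsize'' condition on the sequence $a(n,k)=f_n(i_1,\dots,i_{u(n)})$ follows from the bounds built into the evaluation definition. A crude monomial-by-monomial estimate suffices: each monomial of $f_n$ evaluated at $(i_1,\dots,i_{u(n)})$ has bitsize at most $2^{p(n)} + 2^{p(n)} \cdot \max_j \lceil \log_2(|i_j|+1)\rceil$, and the number of monomials of degree at most $2^{p(n)}$ in $u(n)$ variables has binary logarithm polynomial in $n$; combined with the fact that each $|i_j|$ has bitsize at most $p(n)$ when $k<2^{p(n)}$, this yields a bound of the form $2^{r(n)}$ for some polynomial $r$. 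The converse direction needs nothing beyond decoding $k$ back into $(i_1,\dots,i_{u(n)})$ in polynomial time and invoking the \chpoly\ procedure for $\bit(a)$.

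As the authors flag, the main ``obstacle'' here is entirely cosmetic: fixing the pairing convention, matching up the polynomials appearing in the two definitions, and being pedantic about whether size bounds refer to the tuple or to its binary encoding. There is no genuine mathematical content beyond confirming that the two sets of conditions are literally describing the same object.
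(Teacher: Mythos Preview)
Your proposal is correct and matches the paper's treatment: the paper simply declares the lemma ``obvious from these definitions'' and relies on the earlier remark that a tuple $(\alpha^{(1)},\dots,\alpha^{(n)})$ is to be regarded as the encoding of a single integer. Your write-up is just an explicit unpacking of that remark together with the routine bitsize estimate, so there is nothing to compare.
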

The following theorem of~\cite[Theorem~4.1]{ABKM2005} will also be useful due
to its corollary below.
\begin{theorem}
  Let BitSLP be the following problem: given a constant-free arithmetic
  circuit computing an integer $N$, and given $i\in\nn$ in binary, decide
  whether the $i$-th bit of the binary representation of $N$ is
  1. Then BitSLP is in \ch.
\end{theorem}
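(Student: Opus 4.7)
The plan is to extract the $i$-th bit of the integer $N$ computed by a constant-free SLP $C$ by reducing to modular arithmetic and then a Chinese-remainder reconstruction inside \ch. First observe that if $|C|=s$, then $|N|\le 2^{2^s}$ (since each gate at most doubles the bit-length), so the bit index $i$ has bit-length at most $s+1$, and the number of bits of $N$ is at most $2^s$. The idea is: although $N$ is far too big to write down, its residues modulo primes of polynomial bit-length \emph{are} cheaply computable, and the bits of $N$ can be recovered from enough such residues.

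Step one is to fix a family of primes $p_1,\dots,p_m$, each of polynomial bit-length $\ell=\poly(s)$, with $m=2^{\poly(s)}$ and $\prod_j p_j > 2|N|$; by the prime number theorem one may simply take all primes of bit-length exactly $\ell$ for a suitable $\ell$, and primality of $p_j$ is testable in polynomial time (and hence in \ch). Step two is to observe that the map $(C,j)\mapsto r_j:=N\bmod p_j$ is in \fp: one propagates residues through the gates of $C$, each intermediate value staying of bit-length $O(\ell)$. Thus the sequence $(r_j)$ is definable in \ch.

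Step three is the CRT reconstruction. Writing $M=\prod_j p_j$ and $y_j=(M/p_j)^{-1}\bmod p_j$, we have
$$N\equiv \sum_{j=1}^m r_j\,(M/p_j)\,y_j \pmod{M},$$
and the $i$-th bit of $N$ equals the $i$-th bit of the right-hand side. The crux is to carry out this exponentially long sum, the exponentially long product $M$, and the modular inversions inside \ch. This is where the heavy machinery enters: results on iterated integer multiplication and division (Hesse--Allender--Barrington) and on the \ch\ complexity of such bit problems allow each of the ingredients $M/p_j$, $y_j$, and the final bit of the modular sum to be extracted by a bounded number of applications of the majority operator $\mathbf{C}$ to a \gapp\ function. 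Concretely, one represents the partial quantities $r_j(M/p_j)y_j$ as \gapp-computable integers (using closure of \gapp\ under products), writes the outer sum as a \gapp\ counting function, and reads off the $i$-th bit of its reduction modulo $M$ using the iterated-product result.

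The main obstacle is Step three: it is straightforward to place each individual arithmetic operation in \ch, but one must verify that composing exponentially many such operations only costs a constant number of $\mathbf{C}$ layers, so that the overall procedure stays within the union \ch\ rather than escaping to \pspace. This is precisely the content of Theorem~4.1 of~\cite{ABKM2005}, and it relies on the fact that iterated multiplication, division and CRT reconstruction of numbers presented implicitly (via residues) all admit small-depth threshold implementations which translate into a bounded-level counting description.
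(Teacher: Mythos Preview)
The paper does not actually prove this theorem: it is quoted verbatim as Theorem~4.1 of~\cite{ABKM2005} and used as a black box (only its Corollary~\ref{cor_vp} is needed downstream). So there is no ``paper's own proof'' to compare against; your sketch is really a sketch of the argument in~\cite{ABKM2005}, and at that level of detail it is an accurate outline of their approach: compute $N$ modulo polynomially many small primes in \fp, then reconstruct the desired bit via CRT, pushing the iterated product and the exponential sum through a bounded number of counting layers using the Hesse--Allender--Barrington results on iterated multiplication/division in small-depth threshold circuits.

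One presentational point is worth flagging. In your Step three you write that the crux ``is precisely the content of Theorem~4.1 of~\cite{ABKM2005}''. But Theorem~4.1 of~\cite{ABKM2005} \emph{is} the statement you are asked to prove, so as written this is circular. What you presumably mean is that the crux is handled by the internal lemmas of~\cite{ABKM2005} (and the underlying \textsf{DLOGTIME}-uniform $\mathrm{TC}^0$ results for iterated product, division, and CRT), which together yield Theorem~4.1. If you rephrase to cite those ingredients rather than the theorem itself, the sketch is fine; otherwise the argument, read literally, assumes its own conclusion.
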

\begin{corollary}\label{cor_vp}
  If $(f_n)\in\vpnbzero$ then it can be evaluated in \chpoly\ at integer
  points.
\end{corollary}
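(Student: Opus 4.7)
The plan is to reduce the bit-evaluation language for $(f_n)$ to an instance of BitSLP, and then invoke the cited theorem together with a polynomial-size advice encoding the arithmetic circuit for $f_n$.

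First I would verify conditions (1) and (2) of Definition~\ref{eval}. Condition (1) is immediate since by our standing convention $u(n)$ is polynomially bounded for families in $\vpnbzero$. For condition (2), if $C_n$ is a constant-free arithmetic circuit of size $s(n) = n^{O(1)}$ computing $f_n$, then a straightforward induction on the gates of $C_n$ shows that the formal degree of $f_n$ is at most $2^{s(n)}$ and the absolute values of its integer coefficients are at most $2^{2^{O(s(n))}}$, so their bitsize is bounded by $2^{p(n)}$ for some polynomial $p$.

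Next I would handle condition (3). Fix an input $(1^n, i_1, \dots, i_{u(n)}, j, b)$, where each $i_\ell$ is given in binary, hence has bitsize at most polynomial in $n$. Using $C_n$ as polynomial-size advice, I build in deterministic polynomial time a new constant-free arithmetic circuit $C'_n$ (with no indeterminate inputs) computing the integer $N := f_n(i_1, \dots, i_{u(n)})$. The construction simply replaces each input gate $x_\ell$ of $C_n$ by a constant-free subcircuit $D_\ell$ that evaluates to the integer $i_\ell$. Such a $D_\ell$ can be produced of size $O(\log |i_\ell|) = \mathrm{poly}(n)$ by writing $i_\ell$ in binary and combining repeated-squaring gadgets for the necessary powers of $2 = 1+1$ (the sign is handled by a single subtraction from $0 = 1-1$). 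Since $u(n)$ and $|C_n|$ are polynomial in $n$, the circuit $C'_n$ has polynomial size and computes the integer $N$.

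Finally I would apply the BitSLP theorem: deciding the $j$-th bit of the integer computed by $C'_n$ lies in \ch. Composing this with the polynomial-time construction of $C'_n$ from the advice $C_n$ and the input $(i_1,\dots,i_{u(n)},j)$, one obtains a $\ch/\poly$ algorithm for the bit-evaluation language of $(f_n)$, which is exactly condition (3). There is no substantive obstacle here beyond the bookkeeping of size bounds; the main point to be careful about is that the binary-to-unary hard-coding of the $i_\ell$ stays polynomial in $n$, which is ensured by the fact that the input is itself of polynomial size.
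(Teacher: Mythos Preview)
Your proposal is correct and is exactly the argument the paper leaves implicit (the paper gives no proof beyond presenting the statement as an immediate corollary of the BitSLP theorem, with the circuit $C_n$ serving as the advice). One small wording slip: the size of the subcircuits $D_\ell$ and of $C'_n$ is polynomial in the \emph{input length} $|(1^n,i_1,\dots,i_{u(n)},j,b)|$, not necessarily in $n$ alone, since the $i_\ell$ may have arbitrarily many bits; this is harmless for the $\ch/\poly$ conclusion but you should phrase the bound that way.
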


The results of this paper rely on the following link between Valiant's
classes and the counting hierarchy, \cite[Lemmas~2.5 and~2.12]{burgisser2007}.
\begin{lemma}\label{lemma_ch}
  If $\vpzero=\vnpzero$ then $\chpoly=\p/\poly$.
\end{lemma}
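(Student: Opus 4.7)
The plan is to prove the non-trivial direction $\chpoly \subseteq \p/\poly$; the converse $\p/\poly \subseteq \chpoly$ is immediate from $\mathsf{C_0P} = \p$. First I would establish the base case $\sharpp \subseteq \fp/\poly$, and then bootstrap through the levels of the counting hierarchy.

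For the base case, I would show that every $\sharpp$ function arises by evaluating a \vnpzero\ family at Boolean inputs. Starting from $f(x) = \#\{y \in \{0,1\}^{p(n)} : (x,y) \in A\}$ with $A \in \p$, I arithmetize the polynomial-size Boolean circuit for $A$ by replacing $\wedge, \vee, \neg$ with $ab$, $a+b-ab$, $1-a$ respectively. This produces a constant-free arithmetic circuit of polynomial size and polynomial formal degree (the formal degree being bounded by the circuit size), so the resulting polynomial $q_n(x,y)$ lies in \vpzero\ and the exponential sum $f(x) = \sum_{y} q_n(x,y)$ places $(f_n)$ in \vnpzero. The hypothesis $\vpzero = \vnpzero$ then provides polynomial-size constant-free circuits $D_n$ of polynomial formal degree computing $f_n$. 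Because of the formal-degree bound, every intermediate integer value in $D_n$ on $\{0,1\}$-inputs has bit size polynomial in $n$, so $D_n$ can be evaluated gate by gate in polynomial time. Taking $D_n$ as advice yields $\sharpp \subseteq \fp/\poly$.

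For the bootstrap, I would use the fact that if $\mathsf{C_iP} \subseteq \p/\poly$ then the counting operator applied to such a language still lies in $\sharpp/\poly$: an oracle call to a $\p/\poly$ language is absorbed into the advice, so $\sharpp^{\p/\poly} \subseteq \sharpp/\poly$, which by the base case is contained in $\fp/\poly$. Hence $\mathsf{C_{i+1}P} \subseteq \p/\poly$, and induction yields $\chpoly \subseteq \p/\poly$.

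The main obstacle I expect is the careful verification that polynomial formal degree really is preserved through the arithmetization step, and that this suffices to keep all intermediate bit sizes polynomial during evaluation of \vpzero\ circuits on integer inputs; once this is in hand the inductive climb through the levels of $\ch$ is essentially standard oracle-composition bookkeeping.
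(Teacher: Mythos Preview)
The paper does not prove this lemma; it quotes it from Lemmas~2.5 and~2.12 of~\cite{burgisser2007}. Your outline matches the shape of that argument (establish $\sharpp\subseteq\fp/\poly$ as a base case, then climb the hierarchy), and your inductive step is fine.

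The gap is exactly where you flagged it, but it is a genuine gap and not just a verification. Your parenthetical claim that the formal degree of the arithmetized circuit is ``bounded by the circuit size'' is false: a Boolean circuit may reuse gates, and a chain of $s$ gates $g_{i+1}=g_i\wedge g_i$ arithmetizes to repeated squaring, producing formal degree $2^s$. So the polynomial $q_n$ produced by direct gate-by-gate arithmetization need not lie in \vpzero\ (indeed its actual degree, not just the formal degree of that particular circuit, can be exponential), and hence $\sum_y q_n$ need not lie in \vnpzero.

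The standard repair is to introduce auxiliary variables for the gate values. By a parsimonious Cook--Levin reduction, $(x,y)\in A$ is equivalent to satisfiability of a polynomial-size 3-CNF $\varphi_n(x,y,z)$, with a \emph{unique} satisfying $z$ whenever one exists. The arithmetization of a 3-CNF with $m$ clauses is a product of $m$ cubic polynomials, computed by a constant-free circuit of size $O(m)$ and formal degree $3m$; this places the arithmetized $\varphi_n$ in \vpzero. Summing over both $y$ and $z$ then yields $f(x)$ and puts $(f_n)$ in \vnpzero, after which your argument goes through. Equivalently, one can invoke Valiant's criterion (Lemma~\ref{lemma_criterion}), which packages essentially this construction.
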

In particular, Lemma~\ref{lemma_ch} was used to 
show that big sums and products are computable in the counting hierarchy,
\cite[Theorem~3.7]{burgisser2007}. As already mentioned, the context is not
exactly the same as in~\cite{burgisser2007} because 
we use a unary encoding.
We now give a version of this result which is 
just an easy ``scaling up''
of~\cite[Theorem~3.7]{burgisser2007} (it is enough to define
$a'(2^{p(n)},k)=a(n,k)$ and to apply the result of B\"urgisser).
\begin{theorem}\label{th_burg}
Let $p(n)$ be a polynomial and suppose $a=(a(n,k))_{n\in\nn,k\leq
      2^{p(n)}}$ is definable in \chpoly. Consider
the sequences
    $$b(n)=\sum_{k=0}^{2^{p(n)}}a(n,k)\mbox{ and
    }d(n)=\prod_{k=0}^{2^{p(n)}}a(n,k).$$
    Then $(b(n))_{n\in\nn}$ and $(d(n))_{n\in\nn}$ are definable in \chpoly.
  
    Suppose now that $(s(n))_{n\in\nn}$ and $(t(n))_{n\in\nn}$ are definable
    in \chpoly. Then the sequence of products $(s(n)t(n))_{n\in\nn}$, and, if
    $t(n)>0$, the sequence of quotients $(\lfloor
    s(n)/t(n)\rfloor)_{n\in\nn}$, are definable in \chpoly.
\end{theorem}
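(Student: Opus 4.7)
The plan is to follow the hint given immediately after the theorem statement and deduce everything from Bürgisser's original \cite[Theorem~3.7]{burgisser2007} by a simple rescaling of the index encoding. The only structural difference between the two formulations is that Bürgisser encodes the index $n$ in binary, whereas the definitions adopted here encode it in unary. To bridge this gap I would associate to any \chpoly-definable family $a(n,k)$ (in the present unary sense) the auxiliary family $a'(m,k)$, indexed by $m$ written in binary and defined by $a'(m,k)=a(n,k)$ whenever $m=2^{p(n)}$, and by $a'(m,k)=0$ otherwise, so that $a'$ is globally defined.

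The core verification is that $\bit(a')$ lies in \chpoly\ under the binary-$m$ convention of \cite{burgisser2007}. On inputs $(m,k,j,b)$ with $m=2^{p(n)}$ one has $n=\poly(\log m)$, so the unary-\chpoly\ algorithm for $\bit(a)$ runs in time $\poly(n)=\poly(\log m)$ with advice of size $\poly(n)=\poly(\log m)$, which is exactly what is permitted for a binary-\chpoly\ algorithm for $\bit(a')$; invalid values of $m$ (those not of the form $2^{p(n)}$) are rejected in time $\poly(\log m)$. Conversely, once Bürgisser's theorem yields that $b'(m)=\sum_{k=0}^{m}a'(m,k)$ and $d'(m)=\prod_{k=0}^{m}a'(m,k)$ are binary-\chpoly-definable, the same equivalence $\poly(\log m)=\poly(n)$ gives that $b(n)=b'(2^{p(n)})$ and $d(n)=d'(2^{p(n)})$ are unary-\chpoly-definable. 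This proves the iterated sum and iterated product parts of the theorem.

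For the single product $s(n)t(n)$ and the truncated quotient $\lfloor s(n)/t(n)\rfloor$, I would apply exactly the same rescaling to the corresponding parts of Bürgisser's theorem. Alternatively, one can reduce each to the iterated-sum statement already obtained: writing $s=\sum_{i}s_i 2^i$ where $s_i$ is the $i$-th bit of $s$, the product $s(n)t(n)$ equals the \chpoly-definable iterated sum $\sum_{i=0}^{2^{p(n)}} s_i(n)\,t(n)\,2^i$, and an analogous (slightly more delicate) scheme, based on the standard $\mathsf{TC}^0$ integer-division procedure of Hesse--Allender--Barrington, handles the quotient.

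The only spot requiring real care is the uniformity of the advice: the \chpoly\ advice for $\bit(a')$ could in principle depend on the full binary length of $m$, but here it depends only on $n=\poly(\log m)$, so the zero-padding for spurious values of $m$ does not interfere with the sums or products. The rest is routine bookkeeping, which is precisely why the authors compress the entire argument into a single parenthetical sentence.
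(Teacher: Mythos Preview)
Your proposal is correct and follows exactly the approach the paper indicates in its one-line parenthetical proof: define $a'(2^{p(n)},k)=a(n,k)$ and invoke \cite[Theorem~3.7]{burgisser2007}. You simply flesh out the bookkeeping (handling spurious values of $m$, the advice length, and the product/quotient cases) that the authors leave implicit.
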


\section{Interpolation}

We now begin the main technical developments.

\subsection{Coefficients}

The following lemma is Valiant's criterion~\cite{valiant1979}, see
also~\cite[Prop.~2.20]{burgisser2000} and~\cite[Th.~2.3]{koiran2004}.
\begin{lemma}\label{lemma_criterion}
  Let $a:(1^n,i)\mapsto a(1^n,i)$ be a function of \gapppoly, where $n$ is
  given in unary and $i$ in binary. Let $p(n)$ be a polynomial and define the
  following sequence of polynomials:
  $$f_n(x_1,\dots,x_{p(n)})=\sum_{i=0}^{2^{p(n)}-1}a(1^n,i)x_1^{i_1}\cdots
  x_{p(n)}^{i_{p(n)}},$$
  where $i_j$ is the $j$-th bit in the binary expression of $i$.

  Then $(f_n)\in\vnpzero$.
\end{lemma}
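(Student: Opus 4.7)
The plan is to reduce Valiant's criterion to its two building blocks: expressing a $\gapppoly$ function as a difference of two $\sharpp/\poly$ functions, and recognising that an exponential sum of a constant-free $\vp^0$-computable expression falls into $\vnpzero$ by definition. Since $\vnpzero$ is closed under subtraction, I first write $a(1^n,i)=a_1(1^n,i)-a_2(1^n,i)$ with $a_1,a_2\in\sharpp/\poly$, giving $f_n=f_n^{(1)}-f_n^{(2)}$, and reduce to showing each $f_n^{(j)}\in\vnpzero$.

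Next I unfold the $\sharpp/\poly$ definition: there is a polynomial $q(n)$, a family of polynomial-size advices $(\alpha_n)$, and a language $B\in\p$ such that
\[
a_j(1^n,i)=\#\{w\in\{0,1\}^{q(n)}\colon (1^n,i,w,\alpha_n)\in B\}.
\]
Substituting this into the sum and exchanging orders yields
\[
f_n^{(j)}(x_1,\dots,x_{p(n)})=\sum_{i\in\{0,1\}^{p(n)}}\sum_{w\in\{0,1\}^{q(n)}} \chi_B(1^n,i,w,\alpha_n)\cdot \prod_{k=1}^{p(n)} x_k^{i_k},
\]
where $\chi_B$ is the $\{0,1\}$-valued indicator of $B$. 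This is an exponential sum over $p(n)+q(n)$ boolean variables, so by the definition of $\vnpzero$ it is enough to exhibit a $\vpzero$ family $g_n(\bar x,\bar i,\bar w)$ (with the advice $\alpha_n$ hard-wired, which is allowed by nonuniformity) that equals the summand on $\{0,1\}^{p(n)+q(n)}$.

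I build $g_n$ as the product of two factors. For the monomial part I use $x_k^{i_k}=1+(x_k-1)i_k$ when $i_k\in\{0,1\}$; the product $\prod_k\bigl(1+(x_k-1)i_k\bigr)$ is a constant-free circuit of size $O(p(n))$ and formal degree $O(p(n))$. For the indicator, since $B\in\p$ and $|\alpha_n|=n^{O(1)}$, there is a polynomial-size boolean circuit computing $\chi_B$ on inputs of the relevant length; the standard translation $\neg u \mapsto 1-u$, $u\wedge v\mapsto uv$, $u\vee v\mapsto u+v-uv$ turns this into a constant-free arithmetic circuit of polynomial size and polynomial formal degree that evaluates to $\chi_B$ on boolean inputs. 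Multiplying the two factors gives $g_n\in\vpzero$, and hence $f_n^{(j)}\in\vnpzero$.

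The only delicate point I expect is bookkeeping: verifying that every constant appearing in the translation of the boolean circuit and of the monomial is obtainable from $1$ alone, and that the formal degree remains polynomial (so we stay inside $\vpzero$ rather than merely $\vpnbzero$). Both checks are straightforward because the arithmetisation above uses only $0=1-1$ and $1$, and because each $x_k$ appears with formal degree $1$ inside $\prod_k(1+(x_k-1)i_k)$ while the boolean circuit does not involve the $x_k$'s at all. Once these bounds are confirmed, the conclusion $(f_n)=(f_n^{(1)})-(f_n^{(2)})\in\vnpzero$ follows immediately.
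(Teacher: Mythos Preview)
The paper does not prove this lemma; it simply records it as ``Valiant's criterion'' and cites \cite{valiant1979}, \cite[Prop.~2.20]{burgisser2000} and \cite[Th.~2.3]{koiran2004}. Your overall strategy---split the \gapppoly\ function as $a_1-a_2$ with $a_j\in\sharpp/\poly$, expand the counting definition, and realise each $f_n^{(j)}$ as a boolean exponential sum of a $\vpzero$ family---is exactly the standard route taken in those references, and the closure of $\vnpzero$ under subtraction that you invoke is indeed easy.

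There is, however, a genuine gap in your treatment of the indicator $\chi_B$. The gate-by-gate arithmetisation $\neg u\mapsto 1-u$, $u\wedge v\mapsto uv$, $u\vee v\mapsto u+v-uv$ of a polynomial-size boolean circuit yields polynomial \emph{size} but not, in general, polynomial \emph{formal degree}: each $\wedge$ or $\vee$ can double the formal degree, so a circuit of depth $d$ gives formal degree up to $2^d$, and for $B\in\p$ the depth is polynomial rather than logarithmic. Your justification in the last paragraph (``the boolean circuit does not involve the $x_k$'s at all'') conflates the degree in the variables $x_k$ with the formal degree of the circuit, which by the paper's definition counts every multiplication gate regardless of which variables feed into it. As written, your summand $g_n$ lies only in $\vpnbzero$, not in $\vpzero$.

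The standard repair---and this is what the cited proofs actually do---is to introduce one extra boolean summation variable $y_g$ per gate $g$ of the boolean circuit and replace $\chi_B$ by
\[
\sum_{\bar y}\; y_{\mathrm{out}}\cdot\prod_{g} C_g(\bar y,\bar i,\bar w),
\]
where each $C_g$ is a constant-free degree-$\le 3$ consistency polynomial checking that $y_g$ equals the correct boolean value of gate $g$ (for instance $C_g=y_g y_u y_v+(1-y_g)(1-y_u y_v)$ for an AND gate with inputs $u,v$). The product then has formal degree $O(s)$ for a circuit of $s$ gates, the extra variables $\bar y$ are absorbed into the exponential sum defining $\vnpzero$, and the remainder of your argument goes through unchanged.
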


Here is a ``scaled up'' generalization of~\cite[Th.~4.1(2)]{burgisser2007} to
multivariate polynomials.
\begin{lemma}\label{lemma_generalization}
  Let
  $$f_n(x_1,\dots,x_n)=\sum_{\alpha^{(1)},\dots,\alpha^{(n)}}
  a(n,\alpha^{(1)},\dots,\alpha^{(n)})x_1^{\alpha^{(1)}}\cdots
  x_n^{\alpha^{(n)}},$$
  where the integers $\alpha^{(i)}$ range from 0 to $2^n-1$ and
  $a(n,\alpha^{(1)},\dots,\alpha^{(n)})$ is a sequence of integers of absolute
  value $< 2^{2^n}$ definable in \chpoly.

  If $\vpzero=\vnpzero$ then $(f_n)\in\vpnbzero$.
\end{lemma}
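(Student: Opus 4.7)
The plan is to reduce to Valiant's criterion (Lemma~\ref{lemma_criterion}) via a bit-splitting trick that promotes both the exponent bits and the coefficient bits to fresh formal variables, and then to recover $f_n$ by a polynomial-size repeated-squaring substitution.

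First I would invoke Lemma~\ref{lemma_ch}: the hypothesis $\vpzero = \vnpzero$ yields $\chpoly = \p/\poly$, so the bit-language of $a$ lies in $\p/\poly$, and hence in $\gapppoly$. Writing $a(n,\alpha) = \sum_{j=0}^{2^n - 1} \epsilon_j(n,\alpha)\, 2^j$ with $\epsilon_j(n,\alpha) \in \{-1,0,1\}$ (absorbing the sign bit into each digit), the function $(n,\alpha,j) \mapsto \epsilon_j(n,\alpha)$ belongs to $\gapppoly$ as well.

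Next I would introduce auxiliary variables: $y_{i,k}$ for $1 \le i \le n$, $0 \le k \le n-1$, encoding the binary expansion $\alpha^{(i)} = \sum_k \alpha^{(i)}_k 2^k$, and $z_0,\ldots,z_{n-1}$ encoding the binary expansion $j = \sum_k j_k 2^k$. These $n^2 + n$ variables support the multilinear polynomial
$$g_n(\mathbf{y},\mathbf{z}) \;=\; \sum_{\alpha, j} \epsilon_j(n,\alpha) \prod_{i,k} y_{i,k}^{\alpha^{(i)}_k} \prod_k z_k^{j_k}.$$
Since the coefficient of each multilinear monomial is exactly $\epsilon_j(n,\alpha)$ read off from the bits of the monomial's multi-index, and this function is in $\gapppoly$, Lemma~\ref{lemma_criterion} gives $g_n \in \vnpzero$. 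By the hypothesis, $g_n \in \vpzero \subseteq \vpnbzero$.

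Finally, $f_n(\mathbf{x})$ is obtained from $g_n$ by the substitutions $y_{i,k} \mapsto x_i^{2^k}$ and $z_k \mapsto 2^{2^k}$, which send $\prod_k y_{i,k}^{\alpha^{(i)}_k}$ to $x_i^{\alpha^{(i)}}$ and $\prod_k z_k^{j_k}$ to $2^j$. Each substitute is produced by at most $n$ successive squarings, starting either from $x_i$ or from $2 = 1 + 1$, so the resulting constant-free circuit still has polynomial size, placing $f_n$ in $\vpnbzero$. The formal degree after substitution is exponential, which is why the conclusion is $\vpnbzero$ rather than $\vpzero$. The main obstacle the argument has to overcome is that coefficients $a(n,\alpha)$ can have bitsize up to $2^n$, whereas $\gapppoly$-valued coefficients in Valiant's criterion must be of polynomial bitsize; the $z_k$ variables circumvent this by deferring the $2^j$ weights to the substitution stage, after which repeated squaring keeps the total circuit size polynomial.
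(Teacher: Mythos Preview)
Your argument is correct and follows essentially the same route as the paper: collapse $\chpoly$ to $\p/\poly$ via Lemma~\ref{lemma_ch}, expand the coefficients bit by bit, introduce auxiliary variables for both the exponent bits and the coefficient bits to build a multilinear polynomial whose coefficient function lies in $\gapppoly$, apply Valiant's criterion, and then recover $f_n$ by repeated-squaring substitutions. Your explicit handling of the sign via $\epsilon_j\in\{-1,0,1\}$ is slightly more careful than the paper's terse ``expand $a$ in binary,'' but the structure is identical.
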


\begin{proof}
  Expand $a$ in binary: $a(n,\alpha^{(1)},\dots,\alpha^{(n)})=\sum_{i=0}^{2^n}
  a_i(n,\bar\alpha)2^i$. Let $h_n$ be the following polynomial:
  $$h_n(x_{1,1},x_{1,2},\dots,x_{1,n},x_{2,1},\dots,x_{n,n},z_1,\dots,z_n)=$$
  $$\sum_{i=0}^{2^n}\sum_{\bar\alpha}a_i(n,\bar\alpha) z_1^{i_1}\cdots
  z_n^{i_n} x_{1,1}^{\alpha^{(1)}_1}x_{1,2}^{\alpha^{(1)}_2}\cdots
  x_{1,n}^{\alpha^{(1)}_n}x_{2,1}^{\alpha^{(2)}_1}\cdots
  x_{n,n}^{\alpha^{(n)}_n}.$$
  Then we have:
  $$h_n(x_1^{2^0},x_1^{2^1},\dots,x_1^{2^n},x_2^{2^0},\dots,
  x_n^{2^n},2^{2^0},2^{2^1},\dots,2^{2^n})=f_n(x_1,\dots,x_n).$$ Since
  $\vpzero=\vnpzero$, by Lemma~\ref{lemma_ch} the nonuniform counting
  hierarchy collapses, therefore computing the $i$-th bit $a_i(n,\bar\alpha)$
  of $a(n,\bar \alpha)$ on input $(1^n,\bar \alpha,i)$ is in \gapppoly\ (and
  even in \p/\poly). By Lemma~\ref{lemma_criterion}, $(h_n)\in\vnpzero$. By
  the hypothesis $\vpzero=\vnpzero$, $(h_n)\in\vpzero$ and thus using repeated
  squaring for computing big powers yields $(f_n)\in\vnpnbzero$.\qed
\end{proof}

\subsection{Interpolation}

Let us now state two lemmas on interpolation polynomials.
\begin{lemma}[multivariate Lagrange interpolation]\label{lemma_interpolation}
  Let $p(x_1,\dots,x_n)$ be a polynomial of degree $\leq d$. Then
  $$p(x_1,\dots,x_n)=\sum_{0\leq i_1,\dots,i_n\leq d}p(i_1,\dots,i_n)
  \prod_{k=1}^n\biggl(\prod_{j_k\neq i_k}\frac{x_k-j_k}{i_k-j_k}\biggr),$$
  where the integers $j_k$ range from 0 to $d$.
\end{lemma}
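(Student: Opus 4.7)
The plan is to proceed by induction on the number of variables $n$, taking univariate Lagrange interpolation as the base case. For $n=1$, the statement is the classical Lagrange interpolation formula: a polynomial $p(x)$ of degree $\leq d$ is uniquely determined by its values at the $d+1$ distinct points $0,1,\ldots,d$, and it is recovered from those values by the formula with the Lagrange basis $\ell_i(x)=\prod_{j\neq i}(x-j)/(i-j)$, which satisfies $\ell_i(j)=\delta_{ij}$.

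For the inductive step, suppose the result holds for $n-1$ variables. Given $p(x_1,\ldots,x_n)$ of total degree $\leq d$, I view it as a polynomial in $x_n$ with coefficients in $K[x_1,\ldots,x_{n-1}]$; since the total degree is $\leq d$, the degree in $x_n$ alone is $\leq d$, and each coefficient has total degree $\leq d$ in $x_1,\ldots,x_{n-1}$. Applying the univariate case to the variable $x_n$ gives
$$p(x_1,\ldots,x_n)=\sum_{i_n=0}^{d} p(x_1,\ldots,x_{n-1},i_n)\prod_{j_n\neq i_n}\frac{x_n-j_n}{i_n-j_n}.$$
Each $p(x_1,\ldots,x_{n-1},i_n)$ is a polynomial in $x_1,\ldots,x_{n-1}$ of total degree $\leq d$, so by the induction hypothesis it equals
$$\sum_{0\leq i_1,\ldots,i_{n-1}\leq d} p(i_1,\ldots,i_{n-1},i_n)\prod_{k=1}^{n-1}\prod_{j_k\neq i_k}\frac{x_k-j_k}{i_k-j_k}.$$
Substituting this into the previous expression and collecting the sums yields exactly the claimed formula.

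There is no serious obstacle here; the lemma is entirely standard. The only points to be careful about are the degree bookkeeping (total degree $\leq d$ entails degree $\leq d$ in each individual variable, which is what allows the univariate formula to be applied one coordinate at a time) and the observation that specialising a coordinate to a constant does not raise the total degree in the remaining variables. An equivalent non-inductive argument would instead define $q$ as the right-hand side, verify $q(i_1,\ldots,i_n)=p(i_1,\ldots,i_n)$ for all grid points $(i_1,\ldots,i_n)\in\{0,\ldots,d\}^n$ using $\ell_i(j)=\delta_{ij}$, and then conclude $p=q$ by the uniqueness of a polynomial of degree $\leq d$ in each variable determined by its values on this grid; but the inductive presentation above is the most direct.
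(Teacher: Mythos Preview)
Your proof is correct and follows essentially the same approach as the paper: induction on the number of variables, using univariate Lagrange interpolation in the last variable and then the induction hypothesis on each specialisation. Your added remarks on degree bookkeeping make explicit the point the paper leaves implicit, but the argument is the same.
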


\begin{proof}
  The proof goes by induction on the number $n$ of variables. For $n=1$, this
  is the usual Lagrange interpolation formula: we have
  $$p(x)=\sum_{i=0}^dp(i)\prod_{j\neq i}\frac{x-j}{i-j}$$
  because both polynomials are of degree $\leq d$ and coincide on at least
  $d+1$ distinct points.

  For $n+1$, the induction case $n=1$ yields
  $$p(x_1,\dots,x_{n+1})=\sum_{i_{n+1}=0}^dp(x_1,\dots,x_n,i_{n+1})\prod_{j_{n+1}\neq
    i_{n+1}}\frac{x_{n+1}-j_{n+1}}{i_{n+1}-j_{n+1}}.$$
  By induction hypothesis, this is equal to
  $$\sum_{i_{n+1}=0}^d\left(\sum_{0\leq i_1,\dots,i_n\leq d}p(i_1,\dots,i_n)
  \prod_{k=1}^n\prod_{j_k\neq i_k}\frac{x_k-j_k}{i_k-j_k}\right)
  \prod_{j_{n+1}\neq i_{n+1}}\frac{x_{n+1}-j_{n+1}}{i_{n+1}-j_{n+1}}$$
  which is the desired result.\qed
\end{proof}




\begin{lemma}\label{lemma_inter_ch}
  Let $a(n)=\prod_{i=0}^{2^n-1}\prod_{j\neq i}(i-j)$, where $j$ ranges from 0
  to $2^n-1$. Let $p_{i_1,\dots,i_n}(\bar x)$ be the following family of
  polynomials:
  $$p_{i_1,\dots,i_n}(x_1,\dots,x_n)=\prod_{k=1}^n
  \biggl(a(n)\prod_{j_k\neq i_k}\frac{x_k-j_k}{i_k-j_k}\biggr),$$
  where the integers $j_k$ range from 0 to $2^n-1$ and the integers $i_k$ are
  given in binary and range from 0 to $2^n-1$. Then the coefficients of
  $p_{i_1,\dots,i_n}$ are integers definable in the counting hierarchy, as is
  $a(n)$.
\end{lemma}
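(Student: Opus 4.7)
The lemma has two parts, $a(n)\in\chpoly$ and the integer coefficients of $p_{i_1,\dots,i_n}$ are in $\chpoly$, and both are driven by Theorem~\ref{th_burg} (closure of $\chpoly$ under single exponential sums and products). The role of the multiplier $a(n)$ is precisely to clear the rational denominators $\prod_{j_k\neq i_k}(i_k-j_k)$ of the Lagrange basis: since $a(n)$ is divisible by each of them, the scaled factor $a(n)\prod_{j\neq i}\frac{x-j}{i-j}$ lies in $\zz[x]$. Part~1 is routine: setting $c(n,i):=\prod_{j\neq i}(i-j)$, the sequence $(n,i,j)\mapsto i-j$ is in $\p\subseteq\chpoly$, so the product case of Theorem~\ref{th_burg} (in~$j$) gives $c(n,i)\in\chpoly$, and a second application (in~$i$) gives $a(n)=\prod_ic(n,i)\in\chpoly$.

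\textbf{Reduction to univariate coefficients.} Since $p_{i_1,\dots,i_n}$ factors as $\prod_{k=1}^nQ_{i_k}(x_k)$ with $Q_i(x):=a(n)\prod_{j\neq i}\frac{x-j}{i-j}$, the coefficient of $x_1^{e_1}\cdots x_n^{e_n}$ is the product $\prod_k[x^{e_k}]Q_{i_k}$. Writing $b_i:=a(n)/c(n,i)=\prod_{i'\neq i}c(n,i')$, the quotient is an integer in $\chpoly$ (again by Theorem~\ref{th_burg}) and $Q_i(x)=b_i\prod_{j\neq i}(x-j)\in\zz[x]$. Everything therefore reduces to showing
\[
r_e^{(i)}\;:=\;[x^e]\!\!\prod_{j=0,\,j\neq i}^{2^n-1}(x-j)
\]
is definable in $\chpoly$ as a function of $(n,i,e)$; the full monomial coefficient of $p_{i_1,\dots,i_n}$ is then $\prod_k b_{i_k}r_{e_k}^{(i_k)}$, a polynomial-size product of $\chpoly$ integers and hence in $\chpoly$.

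\textbf{Main obstacle.} The naive expression for $r_e^{(i)}$ as a signed elementary symmetric polynomial of $\{0,\dots,N-1\}\setminus\{i\}$ (with $N=2^n$) has $\binom{N-1}{N-1-e}$ summands, doubly exponential in~$n$ and out of reach of Theorem~\ref{th_burg}. My plan is to pass first to the signed Stirling numbers of the first kind $s_m:=[x^m]\prod_{j=0}^{N-1}(x-j)$: comparing coefficients in $\prod_{j=0}^{N-1}(x-j)=(x-i)\prod_{j\neq i}(x-j)$ gives $s_m=r_{m-1}^{(i)}-i\,r_m^{(i)}$, which, using $r_{N-1}^{(i)}=1$ and $s_N=1$, unrolls into the single exponential sum $r_e^{(i)}=\sum_{m=e+1}^{N}s_m\,i^{m-e-1}$ and places $r_e^{(i)}$ in $\chpoly$ as soon as $s_m$ is. For $s_m$ I would proceed by a balanced divide-and-conquer on the falling factorial: set $S_k(a,m):=[x^m]\prod_{j=a}^{a+2^k-1}(x-j)$, so that $S_0$ is trivial and
\[
S_k(a,m)\;=\;\sum_{m_1=0}^{m}S_{k-1}(a,m_1)\cdot S_{k-1}(a+2^{k-1},m-m_1).
\]
Each level is a single exponential sum of products of $\chpoly$-definable quantities, so Theorem~\ref{th_burg} applied level by level yields $S_k\in\chpoly$ for all $k\leq n$, and in particular $s_m=S_n(0,m)\in\chpoly$. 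The delicate point — where most of the care in the write-up will go — is that the $n$ iterated applications of Theorem~\ref{th_burg} really do stay inside $\chpoly$; this rests on the self-lowness of $\ch$ (the equality $\mathsf{P}^{\ch}=\ch$), which lets polynomially many nested $\chpoly$-definitions be bundled into one.
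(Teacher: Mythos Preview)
Your reduction to the univariate problem and your treatment of $a(n)$ and of the integer quotients $b_i=a(n)/c(n,i)$ via Theorem~\ref{th_burg} are correct and match the paper exactly. The difficulty is entirely in your computation of the Stirling-type coefficients $s_m=[x^m]\prod_{j=0}^{2^n-1}(x-j)$, and there your divide-and-conquer argument has a genuine gap.

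Each application of Theorem~\ref{th_burg} raises the level inside \ch\ by a fixed constant. Your recursion $S_k(a,m)=\sum_{m_1}S_{k-1}(a,m_1)S_{k-1}(a+2^{k-1},m-m_1)$ has depth $n$, where $n$ is the input parameter itself, so unfolding it places $s_m$ only at level $\Theta(n)$ of the counting hierarchy. That is not a fixed level, hence you have not shown $\bit(s)\in\chpoly$. The self-lowness you invoke, $\p^{\ch}=\ch$, says only that a polynomial-time machine with a \emph{fixed} \ch\ oracle stays in \ch; it does not allow you to iterate the construction an input-dependent number of times, each time feeding the previous level back in as the oracle. If that kind of bundling were available, polynomial-depth alternations of majority quantifiers would collapse into \ch, which is not known (and would be a major result on its own).

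The paper avoids this issue entirely with a one-shot trick you may find useful: since the coefficients of $\prod_{j\neq i}(x-j)$ are bounded in absolute value by $2^{2^{n^2}}$, substitute $x=2^{2^{n^2}}$. The value $\prod_{j\neq i}(2^{2^{n^2}}-j)$ is then a single exponential product of \chpoly-definable integers, hence in \chpoly\ by \emph{one} application of Theorem~\ref{th_burg}, and the chosen base is large enough that the coefficients can be read off its binary expansion without overlap. This replaces your whole Stirling-number detour by a single product and a bit extraction, and it is what you should use for $r_e^{(i)}$ (or, if you prefer to keep your pleasant identity $r_e^{(i)}=\sum_{m>e}s_m\,i^{\,m-e-1}$, use the same evaluation trick to put $s_m$ in \chpoly\ and then apply Theorem~\ref{th_burg} once more for the sum).
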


\begin{proof}
  As a first step, 
note that the coefficient of the monomial
  $x_1^{\alpha_1}\cdots x_n^{\alpha_n}$ in $p_n$ is equal to the product of
  the coefficients of the monomials $x_k^{\alpha_k}$ in the univariate
  polynomials $a(n)\prod_{j_k\neq i_k}\frac{x_k-j_k}{i_k-j_k}$. Hence we just
  have to check that these different coefficients of univariate polynomials
  are themselves definable in the counting hierarchy. Let us first focus on
  the univariate polynomial $\prod_{j_k\neq i_k}(x_k-j_k)$, that is, let us
  forget the multiplicative term $b(n,i_k)=a(n)/\prod_{j_k\neq i_k}(i_k-j_k)$
  for the moment.

  We use the same argument as~\cite[Cor.~3.9]{burgisser2007}. Namely, 
we 
remark that the coefficients of this polynomial are bounded in absolute value 
by $2^{2^{n^2}}$. Therefore in the univariate polynomial $\prod_{j_k\neq
    i_k}(x_k-j_k)$ we can replace the variable $x_k$ by $2^{2^{n^2}}$ and
  there will be no overlap of the coefficients of the different powers of
  $x_k$, thus we can recover the coefficients of the monomial from the value
  of this product. By the first 
part of Theorem~\ref{th_burg}, we can evaluate
  in the counting hierarchy the polynomial at the point $2^{2^{n^2}}$, because
  it is a product of exponential size. So the coefficients are definable in
  the counting hierarchy.

  It is now enough to 
note that the first 
part of Theorem~\ref{th_burg}
  implies that $a(n)$ as well as $b(n,i_k)=a(n)/\prod_{j_k\neq i_k}(i_k-j_k)$
  are also definable in the counting hierarchy.\qed
\end{proof}

Remark that the sequence $a(n)$ of Lemma~\ref{lemma_inter_ch} is introduced
only so as to obtain integer coefficients. We will then divide by $a(n)$ in the
next proofs.

\subsection{Main Results}

Let us now state the main theorem.
\begin{theorem}\label{main_th}
  Let $(f_n(x_1,\dots,x_{u(n)}))$ be a family of multivariate
  polynomials. Suppose $(f_n)$ can be evaluated in \chpoly\ at integer
  points. If $\vpzero=\vnpzero$ then $(f_n)\in\vpnbzero$.
\end{theorem}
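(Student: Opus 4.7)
\textbf{Proof plan for Theorem~\ref{main_th}.} The plan is to recover the coefficients of $f_n$ via Lagrange interpolation from its values at integer points, show that these coefficients form a sequence definable in $\chpoly$, and finally invoke Lemma~\ref{lemma_generalization} to build a constant-free circuit of polynomial size. Let $p(n)$ be the polynomial bounding both the number $u(n)$ of variables, the degree of $f_n$, and the bitsize of the coefficients of $f_n$ as in Definition~\ref{eval}; set $d=d(n)=2^{p(n)}$, so that every partial degree of $f_n$ is at most $d-1$.

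First, I would apply Lemma~\ref{lemma_interpolation} with bound $d-1$ on each partial degree to get
\[
f_n(\bar x) \;=\; \sum_{0\leq i_1,\dots,i_{u(n)}\leq d-1} f_n(\bar i)\,
\prod_{k=1}^{u(n)}\prod_{j_k\neq i_k}\frac{x_k-j_k}{i_k-j_k}.
\]
To clear denominators, let $a(n)=\prod_{i=0}^{d-1}\prod_{j\neq i}(i-j)$, which is (up to rescaling $n\mapsto p(n)$) exactly the quantity of Lemma~\ref{lemma_inter_ch}. Multiplying the Lagrange identity by $a(n)^{u(n)}$ yields
\[
a(n)^{u(n)} f_n(\bar x)\;=\;\sum_{\bar i} f_n(\bar i)\,p_{\bar i}(\bar x),
\]
where $p_{\bar i}(\bar x)$ is the integer-coefficient polynomial of Lemma~\ref{lemma_inter_ch}. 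Writing $p_{\bar i}(\bar x)=\sum_{\bar\alpha} c(n,\bar i,\bar\alpha)\,x_1^{\alpha_1}\cdots x_{u(n)}^{\alpha_{u(n)}}$, Lemma~\ref{lemma_inter_ch} (applied with parameter $p(n)$) shows that $c(n,\bar i,\bar\alpha)$ and $a(n)$ are sequences definable in $\chpoly$, with bitsize doubly exponential in $n$. By hypothesis and Lemma~\ref{lem_poly_int}, $f_n(\bar i)$ is also definable in $\chpoly$.

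Now the coefficient of $x^{\bar\alpha}$ in $f_n$ equals
\[
C(n,\bar\alpha)\;=\;\frac{1}{a(n)^{u(n)}}\sum_{\bar i} f_n(\bar i)\,c(n,\bar i,\bar\alpha),
\]
which is an \emph{integer} because $f_n$ has integer coefficients. Theorem~\ref{th_burg} shows that products of sequences in $\chpoly$, big sums indexed by exponentially many tuples $\bar i$, and exact integer quotients are all definable in $\chpoly$; hence $C(n,\bar\alpha)$ is definable in $\chpoly$ (negativity of $a(n)^{u(n)}$ is handled by splitting off its sign, or by dividing by the positive quantity $a(n)^{2u(n)}$ after squaring the numerator's absolute value sign tag). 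The bitsize of $C(n,\bar\alpha)$ is at most $2^{p(n)}$ by hypothesis, and by padding with dummy variables and extra (vanishing) coefficients we may view $f_n$ as the polynomial $\sum_{\bar\alpha\in\{0,\dots,2^N-1\}^N} C(n,\bar\alpha)\,x^{\bar\alpha}$ for $N=\max(u(n),p(n))$. Lemma~\ref{lemma_generalization} (applied at scale $N$, which is again legitimate by the scaling argument already used in Theorem~\ref{th_burg}) then yields $(f_n)\in\vpnbzero$, as desired.

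The main obstacle is the exact-division step: $C(n,\bar\alpha)$ is defined as a quotient, yet must be recognized in $\chpoly$ as an integer sequence. This works only because $f_n$ has integer coefficients, so the division is exact and the floor-quotient provided by Theorem~\ref{th_burg} coincides with the true quotient; managing the sign of $a(n)^{u(n)}$ and the bitsize bookkeeping (to fit the hypotheses of Lemma~\ref{lemma_generalization}) is the only other delicate point.
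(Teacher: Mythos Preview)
Your proposal is correct and follows essentially the same route as the paper: Lagrange interpolation (Lemma~\ref{lemma_interpolation}), the integer-coefficient Lagrange basis polynomials $p_{\bar i}$ of Lemma~\ref{lemma_inter_ch}, the arithmetic closure properties of Theorem~\ref{th_burg} (products, exponential sums, exact quotient by $a(n)^{u(n)}$), and the final appeal to Lemma~\ref{lemma_generalization}. The only cosmetic difference is that the paper first argues that the coefficients of each summand $b_{\bar i}=a^{-u(n)}f_n(\bar i)p_{\bar i}$ are definable in $\chpoly$ and then sums over~$\bar i$, whereas you collapse these two steps into a single formula for the coefficient $C(n,\bar\alpha)$; this is the same computation reordered, and your explicit remarks on sign handling and the padding needed to fit the format of Lemma~\ref{lemma_generalization} simply spell out details the paper leaves implicit.
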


\begin{proof}
  The goal is to 
use the interpolation formula of Lemma~\ref{lemma_interpolation}:
  \begin{equation} \label{fandb}
f_n(x_1,\dots,x_{u(n)})=\sum_{0\leq i_1,\dots,i_{u(n)}\leq
    d}b_{i_1,\dots,i_{u(n)}}(\bar x),
\end{equation}
  where $b_{i_1,\dots,i_{u(n)}}(\bar x)=f_n(i_1,\dots,i_{u(n)})
  \prod_{k=1}^{u(n)}\prod_{j_k\neq i_k}\frac{x_k-j_k}{i_k-j_k}$.
We will show that the coefficients of $b_{i_1,\dots,i_{u(n)}}$ and $f_n$
are definable in \chpoly. The conclusion of the theorem will then 
follow from  Lemma~\ref{lemma_generalization}.

In order to show that the coefficients of $b_{i_1,\dots,i_{u(n)}}$
are definable in \chpoly, we note that 
the polynomial
  $p_{i_1,\dots,i_n}$ and the sequence $a(n)$ of Lemma~\ref{lemma_inter_ch}
  satisfy the relation
  $$b_{i_1,\dots,i_{u(n)}}(\bar x) =
  a(u(n))^{-u(n)}f_n(i_1,\dots,i_{u(n)})p_{i_1,\dots,i_{u(n)}}(\bar x).$$
By Lemma~\ref{lemma_inter_ch}, the coefficients of
  $p_{i_1,\dots,i_{u(n)}}(\bar x)$ are definable in \ch.
By hypothesis, $(f_n)$ can be evaluated in \chpoly\ at integer points. This
  implies by Lemma~\ref{lem_poly_int} that $f_n(i_1,\dots,i_{u(n)})$ is
  definable in \chpoly. 
   This is also the case
  of the product $f_n(i_1,\dots,i_{u(n)})p_{i_1,\dots,i_{u(n)}}(\bar x)$ by
  Theorem~\ref{th_burg}. Now, the same theorem enables us to divide by
  $a(u(n))^{u(n)}$, thereby showing that the coefficients of
  $b_{i_1,\dots,i_{u(n)}}(\bar x)$ are definable in \chpoly. 
It then follows from~(\ref{fandb}) and another application of
  Theorem~\ref{th_burg} that the coefficients of $f_n$
  are definable in \chpoly. Therefore by
  Lemma~\ref{lemma_generalization}, $(f_n)\in\vpnbzero$ under the hypothesis
  $\vpzero=\vnpzero$.\qed
\end{proof}

We now derive some consequences of Theorem~\ref{main_th}.
\begin{theorem}
  Let $(f_n(\bar x,\bar\epsilon))\in\vpnbzero$. Let
  $$g_n(\bar x)=\sum_{\bar\epsilon}f_n(\bar x,\bar\epsilon)\mbox{ and }
  h_n(\bar x)=\prod_{\bar\epsilon}f_n(\bar x,\bar\epsilon).$$
  If $\vnpzero=\vpzero$ then $(g_n)$ and $(h_n)$ are in \vpnbzero.
\end{theorem}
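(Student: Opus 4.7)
The plan is to apply Theorem~\ref{main_th} to each of $(g_n)$ and $(h_n)$: once I show that these families can be evaluated in \chpoly\ at integer points, the hypothesis $\vpzero=\vnpzero$ will deliver $(g_n),(h_n)\in\vpnbzero$ immediately. So the whole proof reduces to verifying the three conditions of Definition~\ref{eval} for the sum family and for the product family.

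First I would exploit Corollary~\ref{cor_vp} together with Lemma~\ref{lem_poly_int}: from $(f_n)\in\vpnbzero$ we conclude that the sequence of integers $a(n,\bar i,\bar\epsilon)=f_n(\bar i,\bar\epsilon)$ is definable in \chpoly, where $(\bar i,\bar\epsilon)$ is viewed as the encoding of a single integer as explained in the remark following the definition of sequences of exponential bitsize. Then, fixing $\bar i$ and letting $\bar\epsilon$ range over $\{0,1\}^{p(n)}$ (identified with integers in $[0,2^{p(n)})$), Theorem~\ref{th_burg} applied to the sequence $\bar\epsilon\mapsto f_n(\bar i,\bar\epsilon)$ shows at once that both $g_n(\bar i)=\sum_{\bar\epsilon}f_n(\bar i,\bar\epsilon)$ and $h_n(\bar i)=\prod_{\bar\epsilon}f_n(\bar i,\bar\epsilon)$ are definable in \chpoly. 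This settles condition~(3) of Definition~\ref{eval} for both families.

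For conditions~(1) and~(2), the number of variables of $g_n$ and $h_n$ is $u(n)$, which is polynomial. Since $(f_n)\in\vpnbzero$ is computed by a polynomial-size constant-free circuit, the degree of $f_n$ and the bitsize of its coefficients are both $\leq 2^{\poly(n)}$. Hence $\deg g_n\leq\deg f_n\leq 2^{\poly(n)}$ and $\deg h_n\leq 2^{p(n)}\deg f_n\leq 2^{\poly(n)}$. The coefficients of $g_n$ form a sum of $2^{p(n)}$ integers of bitsize $2^{\poly(n)}$ and so have bitsize $2^{\poly(n)}$; for $h_n$ the analogous bound follows from standard estimates on coefficients of products of polynomials of polynomially bounded circuit complexity (for instance, by evaluating $h_n$ at a sufficiently large single integer point and reading off the coefficients from the digits of the resulting value, using the trivial bound $|h_n(\bar i)|\leq\prod_{\bar\epsilon}|f_n(\bar i,\bar\epsilon)|$). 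Having verified all three conditions of Definition~\ref{eval} for both $(g_n)$ and $(h_n)$, Theorem~\ref{main_th} yields $(g_n),(h_n)\in\vpnbzero$. The only nontrivial verification is the coefficient-bitsize bound for $h_n$; every other step is a direct application of results already proved earlier in the paper.
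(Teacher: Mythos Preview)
Your proof is correct and follows essentially the same route as the paper: Corollary~\ref{cor_vp} and Lemma~\ref{lem_poly_int} give that $f_n(\bar i,\bar\epsilon)$ is definable in \chpoly, Theorem~\ref{th_burg} then handles the exponential sum and product over $\bar\epsilon$, and Theorem~\ref{main_th} finishes. You are simply more explicit than the paper in checking conditions~(1) and~(2) of Definition~\ref{eval}, which the paper leaves implicit.
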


\begin{proof}
  By Corollary~\ref{cor_vp} $(f_n)$ can be evaluated in \chpoly\ at integer
  points. Now, using Lemma~\ref{lem_poly_int} before and after the first 
 part  of Theorem~\ref{th_burg} shows that $(g_n)$ and $(h_n)$ can also be
  evaluated in \chpoly\ at integer points. The result then follows by
  Theorem~\ref{main_th}.\qed



\end{proof}

The following is now immediate.
\begin{theorem}\label{cor_bnb}
The hypothesis  $\vpzero=\vnpzero$ implies that $\vpnbzero=\vnpnbzero$
and $\vpnbzero=\vpipzero$.
\end{theorem}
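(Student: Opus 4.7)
The plan is to derive this as an immediate corollary of the preceding theorem on exponential sums and products. No additional machinery beyond unwrapping the definitions of $\vnpnbzero$ and $\vpipzero$ is needed.

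First I would handle the inclusion $\vnpnbzero \subseteq \vpnbzero$. Take any family $(g_n) \in \vnpnbzero$. By definition of $\vnpnbzero$ (as the constant-free, degree-unbounded analogue of $\vnp$ built from $\vpnbzero$ in the same way $\vnp$ is built from $\vp$), there exists $(f_n(\bar x, \bar \epsilon)) \in \vpnbzero$ and a polynomial $p$ such that
\[
g_n(\bar x) = \sum_{\bar \epsilon \in \{0,1\}^{p(n)}} f_n(\bar x, \bar \epsilon).
\]
The sum part of the preceding theorem, applied under the hypothesis $\vpzero = \vnpzero$, gives $(g_n) \in \vpnbzero$.

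Next I would handle $\vpipzero \subseteq \vpnbzero$ in exactly the same manner, using instead the product part of the preceding theorem: any $(h_n) \in \vpipzero$ is, by definition, an exponential product of a $\vpnbzero$ family, and the theorem places this product in $\vpnbzero$.

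Finally, I would note that the reverse inclusions $\vpnbzero \subseteq \vnpnbzero$ and $\vpnbzero \subseteq \vpipzero$ are trivial (take an empty sum or empty product over $\{0,1\}^0$, i.e., do not introduce any $\bar \epsilon$ variables at all). Combining, we obtain both equalities. There is no real obstacle here; the work was done in the previous theorem, and the present statement is simply its packaging at the level of complexity classes.
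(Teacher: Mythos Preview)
Your proposal is correct and matches the paper's own treatment exactly: the paper simply declares the theorem ``immediate'' from the preceding result on exponential sums and products, and your argument spells out precisely that derivation by unwrapping the definitions of $\vnpnbzero$ and $\vpipzero$ and observing the trivial reverse inclusions.
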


\begin{remark}
  It is not clear whether the converse of the first implication 
in Theorem~\ref{cor_bnb}
  ($\vpzero=\vnpzero \Longrightarrow\vpnbzero=\vnpnbzero$) holds true. 
This is related to the issue of large constants in arithmetic
circuits: it seems difficult to rule out the possibility that 
some polynomial family in $\vnp^0$  
(for instance, the permanent or the hamiltonian) does not lie in $\vp^0$
but is still computable by polynomial-size arithmetic circuits 
using integer constants
of exponential bit size.

The converse does hold if arbitrary constants are allowed: we indeed have
  $\vpnb=\vnpnb\Longrightarrow\vp=\vnp$. But in this non-constant-free
  context, it is not clear whether $\vp=\vnp$ unconditionally implies
  $\vpnb=\vnpnb$: indeed, in this context the generalized Riemann hypothesis
  would be needed to make the proof of Lemma~\ref{lemma_ch} work
(see~\cite{burgisser2007} for details).
\end{remark}

As mentioned in the introduction, another corollary concerns a transfer
theorem with classes of algebraic complexity in the BSS model. Blum, Shub and
Smale~\cite{BCSS,BSS1989} have defined the classes \p\ and \np\ over the real
and complex fields. It was extended to arbitrary structures by
Poizat~\cite{poizat1995}. Here we use nonuniform versions of these classes,
hence the notations $\pnu$ and $\npnu$.

Theorem~\ref{cor_transfer} below proves that, over a field of characteristic
zero, if we separate (the nonuniform versions of) \p\ and \np\ thanks to a
``simple'' \np\ problem, then we separate (the constant-free versions of) \vp\
and \vnp. The class of ``simple'' problems here is \np\ where the
multiplication is not allowed, i.e., the only operations are $+,-$ and
$=$. It contains in particular Twenty Questions and Subset Sum. 
We will need a result from~\cite{KoiPe06}: 
\begin{theorem}
  Let $K$ be a field of characteristic zero. If $\vpnbzero=\vpipzero$ then
  $\npnu_{(K,+,-,=)}\subseteq\pnu_{(K,+,-,\times,=)}$.
\end{theorem}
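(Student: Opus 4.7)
\medskip\noindent\textbf{Proof proposal.}
My plan is to translate any $A\in\npnu_{(K,+,-,=)}$ into a zero test of a $\vpipzero$ polynomial family and then invoke the hypothesis $\vpnbzero=\vpipzero$ to collapse this into something evaluable in $\pnu_{(K,+,-,\times,=)}$.

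Fix a nonuniform polynomial-time verifier $M$ for $A$: on input $(\bar z,\bar\epsilon)$ with $\bar z\in K^n$ and $\bar\epsilon\in\{0,1\}^{p(n)}$, the machine $M$ uses polynomial-size advice $a_n$ and performs polynomially many $+$, $-$ and equality operations over~$K$. Since multiplication is forbidden, every quantity computed by $M$ is a linear combination of the coordinates of $\bar z$ with coefficients in the additive group generated by the advice, and every branching reduces to a test ``$L(\bar z)\stackrel{?}{=}0$'' for such a linear form $L$ (with coefficients depending on $\bar\epsilon$ and $a_n$). The run of $M$ on $(\bar z,\bar\epsilon)$ is thus governed by a polynomial-size decision tree of equality tests, whose leaves are labelled accept or reject. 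Following the standard decision-tree-to-polynomial translation, I would build a family $(C_n(\bar z,\bar\epsilon))\in\vpnbzero$ of polynomial-size constant-free arithmetic circuits with the property that $C_n(\bar z,\bar\epsilon)=0$ if and only if $M$ accepts $(\bar z,\bar\epsilon,a_n)$.

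Existential quantification over $\bar\epsilon$ then becomes a big product: $\bar z\in A$ iff $F_n(\bar z):=\prod_{\bar\epsilon\in\{0,1\}^{p(n)}}C_n(\bar z,\bar\epsilon)=0$, so $(F_n)\in\vpipzero$. The hypothesis $\vpnbzero=\vpipzero$ then yields $(F_n)\in\vpnbzero$, i.e.\ a nonuniform family of polynomial-size arithmetic circuits over $\{+,-,\times\}$ with only the constant $1$. Such a circuit is evaluated in polynomial time using the operations of $(K,+,-,\times,=)$, and a final equality test $F_n(\bar z)\stackrel{?}{=}0$ decides $A$, placing it in $\pnu_{(K,+,-,\times,=)}$.

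The main obstacle is the construction of $C_n$, and specifically the handling of the ``$L\neq 0$'' edges of the decision tree. A product of the must-vanish linear forms along an accepting path captures only half of each branch condition; encoding ``$L\neq 0$'' by a polynomial zero test requires writing the accepting region as a difference of constructible sets and producing a witness polynomial for it, while staying in $\vpnbzero$ --- polynomial size, constant-free, with all advice-dependent rational coefficients cleared into integer constants built from $1$. This bookkeeping, together with a careful use of characteristic zero to avoid spurious identifications, is the technical heart of the argument and is precisely the content of the construction in \cite{KoiPe06} that I would invoke rather than redo in detail.
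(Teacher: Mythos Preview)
The paper does not prove this theorem; it is quoted as a black-box result from \cite{KoiPe06} and immediately combined with Theorem~\ref{cor_bnb} to obtain Theorem~\ref{cor_transfer}. There is thus no ``paper's own proof'' to compare against.

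Your proposal goes further than the paper by sketching the intended architecture: encode acceptance of the additive verifier as the vanishing of a $\vpnbzero$ family $C_n(\bar z,\bar\epsilon)$, pass to the product over witnesses to land in $\vpipzero$, collapse via the hypothesis, and finish with a single zero test in $\pnu_{(K,+,-,\times,=)}$. You also correctly isolate the one genuinely nontrivial step --- turning a decision tree with $\neq$-branches into a single polynomial whose vanishing characterises acceptance --- and, like the paper, you defer to \cite{KoiPe06} for it. Note that as literally stated, ``$C_n(\bar z,\bar\epsilon)=0$ iff $M$ accepts'' cannot hold for an arbitrary additive decision tree (the accepting region need not be Zariski closed; think of the one-node tree ``accept iff $z_1\neq 0$''), so the construction in \cite{KoiPe06} necessarily does more than build $C_n$ directly --- your last paragraph acknowledges this, but the earlier claim ``I would build a family $(C_n)\in\vpnbzero$ with $C_n=0$ iff $M$ accepts'' should be read as a target rather than a step you have actually carried out. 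With that caveat, your outline is consistent with how the result is invoked in the paper.
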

By Theorem~\ref{cor_bnb}, the following is immediate.
\begin{theorem}\label{cor_transfer}
  Let $K$ be a field of characteristic zero. If $\vpzero=\vnpzero$ then
  $\npnu_{(K,+,-,=)}\subseteq\pnu_{(K,+,-,\times,=)}$.
\end{theorem}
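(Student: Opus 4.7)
The plan is simply to chain the two results that have just been stated. The hypothesis of Theorem~\ref{cor_transfer} is $\vpzero = \vnpzero$, and the conclusion is the inclusion $\npnu_{(K,+,-,=)} \subseteq \pnu_{(K,+,-,\times,=)}$ over an arbitrary field $K$ of characteristic zero. Between them sit Theorem~\ref{cor_bnb} and the cited theorem from~\cite{KoiPe06}, whose hypothesis and conclusion match up cleanly.

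Concretely, I would start by invoking Theorem~\ref{cor_bnb}: under the assumption $\vpzero = \vnpzero$, this gives $\vpnbzero = \vpipzero$. I would then feed this equality as the hypothesis of the quoted theorem from~\cite{KoiPe06}, which directly yields the desired inclusion $\npnu_{(K,+,-,=)} \subseteq \pnu_{(K,+,-,\times,=)}$ for any field $K$ of characteristic zero. Since each implication is already established, the proof collapses to a one-line composition.

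There is no real obstacle to anticipate: the nontrivial work has been done in Theorem~\ref{cor_bnb} (where Theorem~\ref{main_th} and interpolation are used to upgrade $\vpzero = \vnpzero$ to the unbounded-degree, exponential-product collapse $\vpnbzero = \vpipzero$) and in the transfer result of~\cite{KoiPe06} (where the collapse of big arithmetic products into polynomial-size circuits is leveraged to eliminate the need for multiplication in the BSS simulation of an $\npnu_{(K,+,-,=)}$ computation). The only thing to check is that the hypothesis on $K$ matches in both invoked statements, which it does: both the collapse of Valiant's classes and the transfer theorem are formulated over a field of characteristic zero. Hence the proof is genuinely immediate, as advertised before the statement.
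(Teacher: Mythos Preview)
Your proposal is correct and matches the paper's own argument exactly: the paper simply says ``By Theorem~\ref{cor_bnb}, the following is immediate,'' i.e., it chains $\vpzero=\vnpzero \Rightarrow \vpnbzero=\vpipzero$ with the cited result from~\cite{KoiPe06}. One very minor point: the constant-free Valiant classes $\vpzero,\vnpzero,\vpnbzero,\vpipzero$ do not depend on the field $K$ at all (they are defined over the integers), so your closing remark about checking that ``the hypothesis on $K$ matches in both invoked statements'' is slightly off---$K$ only enters in the BSS side, which is already covered by the hypothesis of the theorem from~\cite{KoiPe06}.
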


At last, as a corollary of Theorem~\ref{main_th} again, we obtain the
following result concerning question (*), 
suggesting
that it will be hard to refute.
As pointed out in the introduction, this result does
  not give any evidence concerning the answer to question (*) since the
  separation $\vpzero\neq\vnpzero$ is very likely to be true.
\begin{theorem}\label{cor*}
If question (*) has a negative answer then
  $\vpzero\neq\vnpzero$.
More precisely, let $(f_n)$ be a family of multivariate polynomials
which can be evaluated
in polynomial time at integer points (in the sense of Remark~\ref{peval}).
If $\vpzero=\vnpzero$ then $(f_n)\in\vpnbzero$.
\end{theorem}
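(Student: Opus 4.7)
The plan is to deduce Theorem~\ref{cor*} directly from Theorem~\ref{main_th}, the extra ingredient being only the trivial inclusion of polynomial time in the counting hierarchy. There is no real technical obstacle: the whole point of Theorem~\ref{main_th} was to handle any family whose evaluation at integer points sits in $\chpoly$, and polynomial-time evaluation is just a very special case of this.

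Concretely, I would first establish the ``more precisely'' part of the statement. Assume $(f_n)$ can be evaluated in polynomial time at integer points in the sense of Remark~\ref{peval}. Then by definition the number of variables $u(n)$ is polynomially bounded, the degree and coefficient bitsize of $f_n$ are at most $2^{p(n)}$ for some polynomial $p$, and the bit-extraction language
\[
\{(1^n,i_1,\dots,i_{u(n)},j,b)\mid \text{the $j$-th bit of }f_n(i_1,\dots,i_{u(n)})\text{ is }b\}
\]
lies in $\p$. Since $\p\subseteq\ch$ (by $\mathsf{C_0P}=\p$ in the definition of the counting hierarchy), and clearly any language in $\p$ also lies in $\chpoly$ (take an empty advice), the very same language lies in $\chpoly$. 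Hence $(f_n)$ satisfies all three conditions of Definition~\ref{eval}, i.e., it can be evaluated in $\chpoly$ at integer points. Applying Theorem~\ref{main_th} under the assumption $\vpzero=\vnpzero$ yields $(f_n)\in\vpnbzero$, which is exactly the conclusion of the ``more precisely'' part.

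Finally, the first sentence of the theorem is the contrapositive reading of what has just been proved. A negative answer to question~(*) means that there exists a family $(f_n)$ of polynomials with integer coefficients which is polynomial-time evaluable at integer (indeed rational) points but does \emph{not} admit polynomial-size constant-free arithmetic circuits, i.e., $(f_n)\notin\vpnbzero$. By the statement we just established, this forces $\vpzero\neq\vnpzero$. Thus the entire theorem is obtained as an essentially immediate corollary of Theorem~\ref{main_th} together with the inclusion $\p\subseteq\ch$; the work has already been done in proving the main theorem.
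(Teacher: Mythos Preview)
Your proof is correct and matches the paper's approach: Theorem~\ref{cor*} is presented there as an immediate corollary of Theorem~\ref{main_th}, with no separate proof given, and your argument spells out exactly the intended reasoning via the inclusion $\p\subseteq\chpoly$.
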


\subsection{Arithmetic Circuits with Constants}
\label{subsec_constants}

In this section
we investigate another interpretation of question (*):
we still consider polynomials with integer coefficients, but we allow
rational constants in our circuits
(it turns out that the constant $1/2$ plays a special role due to its appearance in the completeness proof for the permanent).
The hypothesis that question (*) has a negative answer is then stronger,
and we obtain a stronger conclusion than in Theorem~\ref{cor*}.
Namely, we can conclude that
$\tau(\per_n)\neq n^{O(1)}$ instead of $\vnpzero\neq \vpzero$ (see
Remark~\ref{remark1}).
We recall that $\tau$, the constant-free arithmetic circuit complexity of a polynomial, is defined in Section~\ref{valiant}.
\begin{theorem}\label{cor_constant}
 As explained above,  we consider here polynomials with integer coefficients
but circuits with rational constants.
  If question (*) has a negative answer, then
  $\tau(\per_n)$ is not polynomially bounded.

More precisely, let $(f_n)$ be a family of multivariate polynomials
which can be evaluated
in polynomial time at integer points (in the sense of Remark~\ref{peval}).
If $\tau(\per_n)$ is polynomially bounded, $(f_n)$ can be evaluated by a
family of polynomial-size arithmetic circuits that use only the constant~$1/2$.
\end{theorem}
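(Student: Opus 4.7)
The plan is to adapt the proof of Theorem~\ref{cor*} (via Theorem~\ref{main_th}) by replacing the hypothesis $\vpzero=\vnpzero$ everywhere with the weaker hypothesis $\tau(\per_n)=n^{O(1)}$, while carefully bookkeeping the constants that appear in the arithmetic circuits produced at each stage.

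First, I would invoke the observation from~\cite{burgisser2007} that the hypothesis $\tau(\per_n)=n^{O(1)}$ already suffices to collapse the nonuniform counting hierarchy: $\chpoly=\p/\poly$. (Lemma~\ref{lemma_ch} is a consequence of this fact combined with Remark~\ref{remark1}, but the underlying result really only needs the weaker hypothesis on $\tau(\per_n)$.) Since $(f_n)$ is evaluable in polynomial time at integer points and $\p\subseteq\ch$, the family $(f_n)$ is in particular evaluable in \chpoly\ at integer points in the sense of Definition~\ref{eval}.

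Next, the first portion of the proof of Theorem~\ref{main_th} transfers verbatim: using the multivariate Lagrange formula~(\ref{fandb}) combined with Lemma~\ref{lemma_inter_ch} and Theorem~\ref{th_burg}, the integer coefficients of $f_n$ form a sequence definable in \chpoly. None of these ingredients depends on $\vpzero=\vnpzero$.

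Finally, I would redo the proof of Lemma~\ref{lemma_generalization} while tracking constants. The auxiliary polynomial $h_n$ constructed there still lies in \vnpzero\ by Lemma~\ref{lemma_criterion}, which only requires that the coefficient bits be computable in \gapppoly; this is ensured by $\chpoly=\p/\poly\subseteq\gapppoly$. Now by Valiant's completeness theorem, $h_n$ is a $p$-projection of $\per_{m(n)}$ for some polynomial $m$, and crucially this reduction uses only the rational constant $1/2$. Under $\tau(\per_n)=n^{O(1)}$, the permanent is computable by polynomial-size \emph{constant-free} circuits, so $h_n$ admits polynomial-size arithmetic circuits using only the constant $1/2$. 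The final substitutions by repeated squaring that recover $f_n$ from $h_n$ introduce no new constants, so $(f_n)$ itself is computable by polynomial-size arithmetic circuits using only $1/2$, as required.

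The main subtlety is the constant bookkeeping in this last step: one must confirm that Valiant's completeness reduction for $\per_n$ introduces no constants beyond $1/2$. This is precisely the observation underlying Remark~\ref{remark1} (the very reason the converse $\tau(\per_n)=n^{O(1)}\Rightarrow\vpzero=\vnpzero$ is not known to hold), and once accepted the whole argument of Theorem~\ref{main_th} transfers directly to the setting of circuits using only the constant~$1/2$.
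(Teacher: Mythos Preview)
Your proof is correct and follows essentially the same strategy as the paper: replace the collapse lemma by the version needing only $\tau(\per_n)=n^{O(1)}$, carry the interpolation argument of Theorem~\ref{main_th} through unchanged, and then redo Lemma~\ref{lemma_generalization} under the weaker hypothesis while watching the constants. The only presentational difference is that the paper packages the last step as the statement $\tau(2^{p(n)}f_n)=n^{O(1)}$ (a constant-free circuit for a power-of-two multiple of $f_n$) and divides out $2^{p(n)}$ at the very end, whereas you track ``circuits using only the constant $1/2$'' throughout; these are two phrasings of the same fact about Valiant's reduction.
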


It is easy to see that Theorem~\ref{cor_constant} follows from a slight
modification of the different lemmas above. Lemma~\ref{lemma_ch} is replaced
by the following
stronger lemma, 
from~\cite{burgisser2007}.
\begin{lemma}
  If $\tau(\per_n)=n^{O(1)}$ then $\chpoly=\p/\poly$.
\end{lemma}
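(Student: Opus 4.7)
The plan is to follow B\"urgisser's proof of Lemma~\ref{lemma_ch} in \cite{burgisser2007} and to observe that the full strength of the hypothesis $\vpzero=\vnpzero$ is never invoked: the argument needs only that the permanent admits polynomial-size constant-free arithmetic circuits, which is exactly $\tau(\per_n)=n^{O(1)}$. In what follows I sketch such a direct route, using Valiant's $\sharpp$-completeness of the permanent together with a Chinese remaindering argument to pass from arithmetic to boolean circuits, and a bootstrap up the levels of \ch.

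The starting observation is that, by Valiant's classical result, every $\sharpp$ function $f$ can be written as $f(x)=\per(M_x)$ where $M_x$ is a $0/1$ matrix whose entries are produced in polynomial time from $x$. This direction of the reduction is constant-free: the constant $1/2$ mentioned in Remark~\ref{remark1} appears only in the converse direction, needed to express integer permanents back as $\sharpp$ counts. Composing the polynomial-size constant-free circuit for $\per_n$ supplied by the hypothesis with the boolean-to-arithmetic conversion of the map $x\mapsto M_x$ yields a polynomial-size family of constant-free arithmetic circuits computing~$f$.

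The next step is to convert these arithmetic circuits into polynomial-size boolean circuits in $\p/\poly$. The crucial point is that the output $f(x)$ has only polynomially many bits, even though intermediate values of a constant-free arithmetic circuit can reach doubly-exponential magnitude via repeated squaring. This is handled by working modulo polynomially many primes of polynomial bit-length, supplied as nonuniform advice: each modular evaluation $C_n\bmod p$ has only polynomial-bit intermediates and hence lies in $\p/\poly$, and Chinese remaindering recovers the polynomial-bit value $f(x)$ in polynomial time. Comparing with the threshold $2^{p(|x|)-1}$ then gives $\pp\subseteq\p/\poly$.

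The collapse $\chpoly=\p/\poly$ follows by induction on the levels of the counting hierarchy: if $\mathsf{C_i P}/\poly\subseteq\p/\poly$, then $\mathsf{C_{i+1}P}/\poly=(\mathbf{C}.\mathsf{C_iP})/\poly$ sits inside $\pp/\poly$, which collapses to $\p/\poly$ by the base case. I expect the main obstacle to be the careful bookkeeping in the Chinese remaindering step: one must ensure that at each inductive level the quantity being thresholded has polynomial bit-length, so that polynomially many polynomial-size moduli suffice. This is automatic at the $\sharpp$ level (counts are bounded by $2^{p(|x|)}$), and is preserved at higher levels precisely because the previous level has already been shown to collapse into $\p/\poly$, so the counts remain $\sharpp$-like with polynomial bit-length output.
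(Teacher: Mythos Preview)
The paper does not give its own proof of this lemma: it simply states the result and attributes it to B\"urgisser~\cite{burgisser2007}, exactly as it did for the weaker Lemma~\ref{lemma_ch}. So there is nothing to compare against beyond the citation, and your plan---namely, to point out that B\"urgisser's argument for Lemma~\ref{lemma_ch} already only uses $\tau(\per_n)=n^{O(1)}$---is precisely what the paper is implicitly relying on.

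Your sketch of that argument is essentially sound. Two small points of care. First, the reduction from an arbitrary $\sharpp$ function $f$ to the $0/1$ permanent does not in general give $f(x)=\per(M_x)$ on the nose; one obtains $f(x)$ from $\per(M_x)$ by a polynomial-time post-processing (or via a polynomial-time $1$-Turing reduction). This is harmless for your purposes, since the post-processing is absorbed into the $\p/\poly$ simulation, but the wording should be adjusted. Second, in the inductive step you should be explicit that the advice used for the inner predicate $B\in\p/\poly$ is merged with the advice at the outer $\pp$ level, so that $\mathbf{C}.(\p/\poly)\subseteq\pp/\poly$ holds in the nonuniform sense; your last paragraph gestures at this but a one-line justification would make the induction airtight. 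With those tweaks, your write-up would constitute a correct self-contained proof of the lemma that the paper only cites.
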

Then Lemma~\ref{lemma_generalization} is replaced by the following result,
whose proof relies on an inspection of Valiant's proof~\cite{valiant1979} of
\vnp-completeness of the permanent, see~\cite{burgisser2007}.
\begin{lemma}
  Let
  $$f_n(x_1,\dots,x_n)=\sum_{\alpha^{(1)},\dots,\alpha^{(n)}}
  a(n,\alpha^{(1)},\dots,\alpha^{(n)})x_1^{\alpha^{(1)}}\cdots
  x_n^{\alpha^{(n)}},$$
  where the integers $\alpha^{(i)}$ range from 0 to $2^n-1$ and
  $a(n,\alpha^{(1)},\dots,\alpha^{(n)})$ is a sequence of integers of absolute
  value $< 2^{2^n}$ definable in \chpoly.

  If $\tau(\per_n)=n^{O(1)}$ then there exists a polynomial $p(n)$ such that
  $\tau(2^{p(n)}f_n)=n^{O(1)}$.
\end{lemma}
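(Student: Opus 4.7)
The plan is to follow the proof of Lemma~\ref{lemma_generalization} step by step, but to replace the use of the collapse $\vpzero=\vnpzero$ by the weaker hypothesis $\tau(\per_n)=n^{O(1)}$, tracking precisely where the constant $1/2$ appears and absorbing it into the extra factor $2^{p(n)}$ in the conclusion. Throughout, I would rely on the strengthened version of Lemma~\ref{lemma_ch} stated just above (if $\tau(\per_n)=n^{O(1)}$ then $\chpoly=\p/\poly$), which lets us run the bit-extraction arguments exactly as before.

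First, I would expand the coefficients in binary: $a(n,\bar\alpha)=\sum_{i=0}^{2^n} a_i(n,\bar\alpha)2^i$ with $a_i\in\{0,1\}$, and introduce the same auxiliary polynomial
$$h_n(\bar x_{i,j},\bar z)=\sum_{i=0}^{2^n}\sum_{\bar\alpha}a_i(n,\bar\alpha)\,z_1^{i_1}\cdots z_n^{i_n}\,\prod_{k,\ell}x_{k,\ell}^{\alpha^{(k)}_\ell},$$
so that substituting $x_{k,\ell}\mapsto x_k^{2^\ell}$ and $z_i\mapsto 2^{2^{i-1}}$ recovers $f_n$. The strengthened Lemma gives $\chpoly=\p/\poly$, so the bit function $(1^n,\bar\alpha,i)\mapsto a_i(n,\bar\alpha)$ lies in $\p/\poly\subseteq\gapppoly$. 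Valiant's criterion (Lemma~\ref{lemma_criterion}) then yields $(h_n)\in\vnpzero$ unconditionally, exactly as in Lemma~\ref{lemma_generalization}.

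The only place where the two proofs diverge is in turning $h_n\in\vnpzero$ into a small arithmetic circuit. Here I would invoke Valiant's reduction from a $\vnp^0$ family to the permanent, whose only non-integer constant is $1/2$ and which produces, for each $n$, a projection of $\per_{m(n)}$ (with $m(n)$ polynomial in~$n$) that uses at most polynomially many occurrences of the constant $1/2$. Under the hypothesis $\tau(\per_n)=n^{O(1)}$, composing this reduction with a polynomial-size constant-free circuit for $\per_{m(n)}$ yields a polynomial-size arithmetic circuit for $h_n$ whose only rational constant is $1/2$, appearing at most $q(n)$ times for some polynomial $q$. Multiplying this circuit through by $2^{q(n)}$ (which can itself be built constant-free by repeated squaring in size $O(\log q(n))$) clears every $1/2$ and produces a constant-free polynomial-size circuit for $2^{q(n)}h_n$, so $\tau(2^{q(n)}h_n)=n^{O(1)}$.

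To finish, I would push this bound through the substitution $x_{k,\ell}\mapsto x_k^{2^\ell}$, $z_i\mapsto 2^{2^{i-1}}$. Every such substituted value is a constant-free polynomial-size expression obtained by repeated squaring, so the resulting circuit for $2^{q(n)}f_n$ is constant-free and of size $n^{O(1)}$, and we may take $p(n):=q(n)$. The main subtle point — and the step that forces the extra factor $2^{p(n)}$ in the statement, unlike in Lemma~\ref{lemma_generalization} — is the bookkeeping in the third paragraph: one must verify that Valiant's completeness reduction genuinely uses only the constants $\{0,\pm 1,1/2\}$ and only polynomially many copies of $1/2$, so that clearing denominators costs only a single polynomial-bitsize factor. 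This is precisely the inspection of Valiant's proof referenced in the paper, and it is the only place where working with $\tau(\per_n)=n^{O(1)}$ rather than with $\vpzero=\vnpzero$ produces a genuinely weaker conclusion.
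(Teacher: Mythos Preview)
Your overall strategy matches the paper's: redo the proof of Lemma~\ref{lemma_generalization}, replacing Lemma~\ref{lemma_ch} by its strengthened form and replacing the collapse $\vpzero=\vnpzero$ by the reduction of $h_n$ to a permanent of polynomial size via Valiant's completeness proof. The paper does not spell this out either; it simply points to an inspection of Valiant's proof and to B\"urgisser's analysis.

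There is, however, one genuine gap in your argument. You write that after plugging the projection into a constant-free circuit for $\per_{m(n)}$ you obtain a circuit for $h_n$ in which the constant $1/2$ occurs $q(n)$ times, and that ``multiplying this circuit through by $2^{q(n)}$ clears every $1/2$''. That inference is incorrect: multiplying the \emph{output} by $2^{q(n)}$ produces a circuit for $2^{q(n)}h_n$, but the $1/2$'s are still sitting at the input gates, so the circuit is not constant-free and you have not bounded $\tau(2^{q(n)}h_n)$. The number of occurrences of $1/2$ is not the relevant quantity.

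The correct way to clear the denominators uses the specific structure of the reduction: $h_n$ is a \emph{projection} of $\per_{m(n)}$, and $\per_m$ is homogeneous of degree $m$. Hence
\[
2^{m(n)}\,h_n \;=\; 2^{m(n)}\,\per_{m(n)}(M) \;=\; \per_{m(n)}(2M),
\]
where $M$ is the matrix of the projection. The entries of $2M$ are $2x_j$ or constants in $\{0,\pm 2,1\}$ (since $2\cdot\frac{1}{2}=1$), all computable by constant-free circuits. Composing with the constant-free circuit for $\per_{m(n)}$ now gives $\tau\bigl(2^{m(n)}h_n\bigr)=n^{O(1)}$, and the substitution step then yields $\tau\bigl(2^{m(n)}f_n\bigr)=n^{O(1)}$, so one may take $p(n)=m(n)$. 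With this fix your argument goes through and coincides with what the paper intends.
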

Finally, Theorem~\ref{main_th} becomes the following.
\begin{lemma}
  Let $(f_n(x_1,\dots,x_{u(n)}))$ be a family of multivariate polynomials
  (with integer coefficients). Suppose $(f_n)$ can be evaluated in \chpoly\ at
  integer points. If $\tau(\per_n)=n^{O(1)}$ then there exists a polynomial
  $p(n)$ such that $\tau(2^{p(n)}f_n)=n^{O(1)}$.
\end{lemma}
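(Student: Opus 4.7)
The plan is to mimic the proof of Theorem~\ref{main_th} verbatim, replacing each appeal to $\vpzero=\vnpzero$ by the weaker hypothesis $\tau(\per_n)=n^{O(1)}$, and substituting the two strengthened lemmas stated just above (the boolean collapse lemma giving $\chpoly=\p/\poly$ from $\tau(\per_n)=n^{O(1)}$, and the $\tau$-version of Lemma~\ref{lemma_generalization}). The algebraic skeleton of the argument -- Lagrange interpolation, the auxiliary integer $a(n)$ and polynomials $p_{\bar i}$ from Lemma~\ref{lemma_inter_ch}, and the use of Theorem~\ref{th_burg} to handle big sums, products, and quotients of \chpoly-definable sequences -- is unchanged.

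Concretely, I would first apply Lemma~\ref{lemma_interpolation} to $f_n$ on a grid of side $d(n)+1$, where $d(n)\leq 2^{p(n)}$ is the degree bound from the \chpoly-evaluability hypothesis. This yields $f_n(\bar x)=\sum_{\bar i} b_{\bar i}(\bar x)$ with
$$b_{\bar i}(\bar x) = a(u(n))^{-u(n)}\, f_n(\bar i)\, p_{\bar i}(\bar x),$$
where $a$ and the $p_{\bar i}$ are those of Lemma~\ref{lemma_inter_ch}, rescaled so that the interpolation nodes range over $0,\dots,d(n)$ rather than $0,\dots,2^n-1$. By Lemma~\ref{lemma_inter_ch}, the integer $a(u(n))^{u(n)}$ and the integer coefficients of each $p_{\bar i}$ are definable in \chpoly; by hypothesis and Lemma~\ref{lem_poly_int}, so are the integers $f_n(\bar i)$. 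Combining these via Theorem~\ref{th_burg} (products, big sums, quotients) shows that the integer coefficients of $f_n$ themselves form a \chpoly-definable sequence.

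At that point I would invoke the $\tau$-analogue of Lemma~\ref{lemma_generalization} stated just above, which replaces the last step of the proof of Theorem~\ref{main_th}: from $\tau(\per_n)=n^{O(1)}$ and the \chpoly-definability of the integer coefficients of $f_n$ (bounded in absolute value by some $2^{2^{q(n)}}$), it immediately provides a polynomial $p(n)$ with $\tau(2^{p(n)}f_n)=n^{O(1)}$, which is the desired conclusion.

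The main point to verify carefully -- which I expect to be the only real obstacle -- is the scaling of Lemma~\ref{lemma_inter_ch}: the lemma is stated for a grid of size $2^n$, but here we need it for a grid of size $d(n)\leq 2^{p(n)}$ and in $u(n)$ variables. One must check that the bounds used in its proof (coefficients of $\prod_{j_k\neq i_k}(x_k-j_k)$ bounded by $2^{2^{n^2}}$, replacement of $x_k$ by a large power of $2$ to recover coefficients, application of Theorem~\ref{th_burg} to exponential-size products) still fit the doubly-exponential bound demanded by the $\tau$-analogue of Lemma~\ref{lemma_generalization}. This is a routine magnitude check paralleling the one in Theorem~\ref{main_th}, but it is where one must be careful that no quantity grows beyond the allowed $2^{2^{\mathrm{poly}(n)}}$.
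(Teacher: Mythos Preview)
Your proposal is correct and is exactly the approach the paper intends: the paper does not give a separate proof of this lemma but simply states that Theorem~\ref{main_th} ``becomes'' it once one substitutes the strengthened collapse lemma and the $\tau$-version of Lemma~\ref{lemma_generalization}. Your write-up faithfully reproduces the skeleton of the proof of Theorem~\ref{main_th} with those substitutions, and the scaling issue you flag for Lemma~\ref{lemma_inter_ch} is already implicitly handled in the proof of Theorem~\ref{main_th} itself.
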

Theorem~\ref{cor_constant} follows since the coefficient $2^{p(n)}$ can be
cancelled by multiplying by the constant $2^{-p(n)}$, 
which can be computed from scratch from the constant $1/2$.

\small

\section*{Acknowledgments}
We would like to thank Erich Kaltofen and Christos Papadimitriou 
for sharing their thoughts on question (*).


\end{document}